\documentclass{article}
\usepackage{nameref}
\usepackage[pdfstartview=FitH,pdfpagemode=UseNone,pagebackref,colorlinks,linkcolor=blue,filecolor = blue,
citecolor = blue, urlcolor = blue]{hyperref}

\usepackage{amsmath, amssymb, amsthm}
\usepackage{geometry}
\usepackage{algorithm}
\usepackage{algorithmic}
\usepackage{tikz}
\usetikzlibrary{arrows.meta,positioning}

\title{Total Variation Distance Estimation through Domain Reduction}
\author{Arnab Bhattacharyya\\{\small University of Warwick} \and Graham Cormode\\{\small University of Oxford} \and Yucheng Fu\\{\small University of Hong Kong} \and Kuldeep S.~Meel\\{\small Georgia Institute of Technology}}
\date{}

\newtheorem{definition}{Definition}[section]
\newtheorem{lemma}{Lemma}[section]
\newtheorem{theorem}{Theorem}[section]

\newcommand{\ignore}[1]{}
\begin{document}

\maketitle
\begin{abstract}
Computing the total variation (TV) distance between succinctly represented high-dimensional distributions is generally intractable. We give an FPRAS for TV distance between mixtures of product distributions and, more generally, for a natural class of structured probabilistic circuits.

Our main technique is a novel application of domain reduction: Given a family of feature vectors indexed by assignments, we use Lewis-weight sampling to replace the assignment domain by a polynomial-size weighted subset that simultaneously approximates the sum of absolute values of every linear projection. For mixtures of product distributions, we construct such reduced domains incrementally over the coordinates, obtaining the first FPRAS with running time polynomial in both the dimension and the number of mixture components. We then extend the approach to smooth, structured-decomposable probabilistic circuits with a common structured architecture.
\end{abstract}

\section{Introduction}

High-dimensional probability distributions arise throughout theoretical
computer science, machine learning, statistics, and statistical physics.
Such distributions are typically represented succinctly: although a
distribution on $[q]^n$ has $q^n$ possible outcomes, familiar classes such
as product distributions and graphical models can
be specified using only polynomially many parameters. A basic algorithmic
problem is to compare two distributions given through succinct
representations, without explicitly enumerating their exponentially large
domains.

A canonical measure of discrepancy is the \emph{total variation distance}
\[
    d_{\mathrm{TV}}(P,Q)
    =
    \frac{1}{2}\sum_x |P(x)-Q(x)|.
\]
Total variation has several equivalent statistical interpretations: for
instance, it is the maximum discrepancy $|P(A)-Q(A)|$ over events $A$, and
the minimum disagreement probability over couplings of $P$ and $Q$.
Computationally, however, it behaves quite differently from distances such
as KL divergence, Hellinger distance, and $\chi^2$-divergence: even for
product distributions, TV distance does not decompose coordinatewise.

This distinction was made concrete by Bhattacharyya et
al.~\cite{ijcai2023p387,bhattacharyya2025total}, who showed that exactly computing the TV distance
between two product distributions is $\#\mathsf{P}$-complete. This
initiated a recent line of work on relative approximation of TV distance
for succinctly represented distributions. Feng, Guo, Jerrum, and
Wang~\cite{feng2023simple} gave a polynomial-time randomized relative-approximation
algorithm for arbitrary product distributions. Feng, Liu, and
Liu~\cite{feng2024soda} subsequently gave a deterministic FPTAS, and extended
their approach to Markov chains. Beyond product distributions,
Bhattacharyya et al.~\cite{bhattacharyya2024inference} reduced relative TV approximation for
same-structure Bayesian networks to probabilistic inference, yielding
FPRASs whenever the corresponding inference problem is tractable, while
Feng, Liu, and Yang~\cite{pmlr-v291-feng25a} obtained relative-approximation
algorithms for several families of spin systems.

A natural next case is a \emph{mixture of product distributions}. Let
\[
    P=\sum_{i=1}^{k_1}\alpha_i P_i,
    \qquad
    Q=\sum_{j=1}^{k_2}\beta_j Q_j,
\]
where every $P_i$ and $Q_j$ is a product distribution on $[q]^n$.
Mixtures retain a compact description, but the latent mixture component
introduces global dependence among all $n$ coordinates. 
They have a long history of study in learning theory (e.g., \cite{feldman2008learning,jain2014learning,gordon2021source,gordon2024identification}).
\cite{bhattacharyya2025computational} 
gave a simple polynomial-time algorithm for checking equivalence ($d_{\mathrm{TV}}=0$) 
between two mixtures of product distributions.
Recently, Feng,
Fu, Yang, and Zhang~\cite{feng2026computing} gave an FPRAS whose running time has
exponential dependence on the total number of mixture components.
Independent recent work of Fu~\cite{Fu26} gives a deterministic
FPTAS with a similar exponential dependence on
$k=k_1+k_2$. Thus, these algorithms run in polynomial time when $k$ is
constant, but not when the number of mixture components is part of the
input. Indeed, the two recent works~\cite{feng2026computing, Fu26} explicitly leave open the
following question:
\begin{quote}
\centering
\emph{Can the TV distance between mixtures of product distributions be
approximated in time polynomial simultaneously in $n$, $q$, the number of
mixture components $k$, and $1/\varepsilon$?}
\end{quote}

Looking beyond this, we can view the mixture of product distributions as a simple circuit whose leaves are the input variables, so that the middle layer computes the product distributions and the top layer computes their sum. 
The gates are then $\oplus$ and $\otimes$, which perform the sum and product operations on distributions, respectively. 
Allowing other wiring graphs defines a more general way to describe distributions compactly as a \textit{probabilistic circuit}.  
This family of probabilistic circuits has important subcases, such as smoothness and decomposability (defined formally below). 
Further generalizations allow other types of gate to perform (bilinear) operations on their inputs. 
Since probabilistic circuits naturally express product distributions as a special case, they inherit the hardness bounds. 
Probabilistic circuits have been advocated as a powerful unifying representation for inference tasks in machine learning and statistics \cite{ChoiVergariVanDenBroeck20,ZhangWangArenasVanDenBroeck25}.
However, there have been no prior results for approximation of TV distance over probabilistic circuits. 

\paragraph{Our results.}
We answer the above open question affirmatively. 
Our main result gives a relative
approximation to the TV distance between two mixtures of product
distributions in time polynomial in the dimension, the alphabet size, the
number of mixture components, and the desired relative accuracy.

\begin{theorem}[Informal]
Let $P$ and $Q$ be mixtures of $k_1$ and $k_2$ product distributions,
respectively, over $[q]^n$, and let $k=k_1+k_2$. There is an FPRAS which,
given $\varepsilon,\delta>0$, outputs $\widetilde d$ satisfying
\[
    (1-\varepsilon)d_{\mathrm{TV}}(P,Q)
    \leq
    \widetilde d
    \leq
    (1+\varepsilon)d_{\mathrm{TV}}(P,Q)
\]
with probability at least $1-\delta$, using
\[
    \operatorname{poly}
    \left(
        n,q,k,\frac{1}{\varepsilon},
        \log\frac{1}{\delta}
    \right)
\]
arithmetic operations.
\end{theorem}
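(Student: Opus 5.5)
We will write the signed difference $P-Q$ as a single signed mixture $f(x)=\sum_{\ell=1}^{k} w_\ell \prod_{t=1}^n p_{\ell,t}(x_t)$, with $w_\ell=\alpha_i$ or $-\beta_j$. Then $2d_{\mathrm{TV}}(P,Q)=\sum_{x\in[q]^n}|\langle w, v(x)\rangle|$, where $v(x)\in\mathbb{R}^k$ is the feature vector $v(x)_\ell=\prod_t p_{\ell,t}(x_t)$. The key observation is that the quantity we need is the $\ell_1$ norm of $Vw$, where $V$ is the $q^n\times k$ matrix whose rows are $v(x)$. This matrix has rank at most $k$. Lewis-weight sampling for $\ell_1$ subspace embeddings gives a weighted subset $S$ of rows, $|S|=\tilde O(k/\varepsilon^2)$, with weights $\mu$, such that
\[
\sum_{x\in S}\mu(x)|\langle w,v(x)\rangle|\in(1\pm\varepsilon)\|Vw\|_1
\]
simultaneously for all $w\in\mathbb{R}^k$. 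The whole difficulty is that $V$ has $q^n$ rows, so we cannot compute Lewis weights directly.

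To handle this, we build reduced domains incrementally over the coordinates. Let $V^{(t)}$ be the matrix indexed by prefixes $y\in[q]^t$, with rows $v^{(t)}(y)_\ell=\prod_{s\le t}p_{\ell,s}(y_s)$. Because the distributions beyond position $t$ sum to one, we have $\|Vw\|_1$-type relations. Precisely, for every $w$ and every fixed suffix behaviour, the row set of $V^{(t+1)}$ is $\{v^{(t)}(y)\odot c_a : y,\ a\in[q]\}$ with $c_a=(p_{\ell,t+1}(a))_\ell$. So the $\ell_1$ norm of $V^{(t+1)}w$ equals $\sum_a\|V^{(t)}(c_a\odot w)\|_1$. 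Suppose $(S_t,\mu_t)$ is a weighted subset that $(1\pm\varepsilon')$-approximates $\|V^{(t)}u\|_1$ for all $u\in\mathbb{R}^k$; the ``for all $u$'' guarantee is exactly what makes it survive the reweighting $u=c_a\odot w$. Then the candidate set $S_t\times[q]$ with weights $\mu_t(y)$ approximates $\|V^{(t+1)}w\|_1$ within $1\pm\varepsilon'$ for all $w$. It has size $q|S_t|$. We now compute $\ell_1$ Lewis weights of this explicit weighted $q|S_t|\times k$ matrix in polynomial time, and resample down to $\tilde O(k/\varepsilon'^2)$ rows, giving $(S_{t+1},\mu_{t+1})$ with error $(1\pm\varepsilon')^2$. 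After $n$ steps $S_n$ is a weighted subset of $[q]^n$. We output $\frac12\sum_{x\in S_n}\mu_n(x)|f(x)|$ evaluated at the actual $w$.

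The main obstacle is error and failure-probability accumulation over $n$ rounds. Errors compound multiplicatively, so we take $\varepsilon'=\varepsilon/(2n)$, giving $(1+\varepsilon')^{2n}\le 1+O(\varepsilon)$ and sample sizes $\tilde O(kn^2/\varepsilon^2)$, which is still polynomial. Each round must succeed with probability $1-\delta/n$. This costs a $\log(n/\delta)$ factor in the sample size via the standard high-probability Lewis-weight embedding bound, or, alternatively, median amplification of each round with a verifiable check. We must also verify that the Lewis-weight guarantee applies to the weighted, non-uniform input. For this we fold $\mu_t$ into the rows (scale row $y$ by $\mu_t(y)$), which is legitimate since $\ell_1$ is positively homogeneous. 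We also need approximate Lewis weights computable in $\mathrm{poly}(q|S_t|,k)$ arithmetic operations by the standard fixed-point iteration, which suffices because sampling with constant-factor overestimates only increases the sample size by a constant.

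Finally, the output is a relative approximation because the embedding is relative for every $w$, including the true one. This holds even when $d_{\mathrm{TV}}$ is tiny, so no additive-error issue arises, and the total cost is $\mathrm{poly}(n,q,k,1/\varepsilon,\log(1/\delta))$ arithmetic operations.
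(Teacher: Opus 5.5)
Your proposal is correct and is essentially the paper's proof. The algorithm is the same: extend every retained prefix by every symbol with the inherited weight, fold the weights into the rows, Lewis-weight sparsify with per-step accuracy $\Theta(\varepsilon/n)$ and failure probability $\delta/n$, and take a union bound; the key point is also the same, namely that the embedding holds for \emph{all} queries and so survives the Hadamard reweighting $c_a\odot w$. The analysis differs only cosmetically: you carry a forward invariant uniform over all $u\in\mathbb{R}^k$ (the form the paper uses for circuits), whereas the paper's mixture analysis telescopes the hybrid functional $F_t=\frac12\sum_{x_{>t}}E(\mathcal C_t,\mathbf w\otimes\mathbf S_{>t}(x_{>t}))$; note that each round contributes only one factor $(1\pm\varepsilon')$, so your exponent $2n$ is merely conservative, and the ``verifiable check'' alternative is unnecessary.
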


In particular, the dependence on the number of mixture components is
polynomial rather than exponential. 
Moreover, the running time is
independent of the numerical value of $d_{\mathrm{TV}}(P,Q)$; in the
real-arithmetic model, the algorithm is strongly polynomial in the natural
input parameters.
This represents a significant breakthrough for the complexity of $d_{\mathrm{TV}}$, showing a FPRAS that is truly polynomial in the size of the compressed input. 

Our algorithm is based on a general technique that we refer to as 
\emph{domain reduction} for distributions. 
Rather than approximating the probability of
individual outcomes for the exponential sized domain, we repeatedly compress an exponentially growing set
of partial assignments to a polynomial-size weighted subset that
approximately preserves every relevant linear functional. 
The reduction is obtained through $\ell_1$ subspace sparsification using Lewis weights.
The fact that all linear functionals are preserved simultaneously is what
allows a reduced domain constructed at one stage to remain valid under
arbitrary future coordinates.

In fact, our full result extends much more broadly than mixtures of product distributions.  
The same principle of domain reduction for distributions extends substantially beyond mixtures. 
We show that domain reductions compose through the sum and product operations of
probabilistic circuits, and more generally through any bilinear operator. 
As a consequence, we obtain a
relative-approximation algorithm for two smooth, structured-decomposable
probabilistic circuits respecting a common v-tree; apart from the common
decomposition of variable scopes, the two circuits may have unrelated
gates and wiring. 
Thus mixtures of product distributions arise as a
particularly simple, sequential instance of a more general compositional phenomenon.

Weighted tree automata (WTAs) provide another natural application of
our approach. These classical models assign weights to labeled trees
and include probabilistic and latent-variable context-free grammars
as special cases \cite{FulopVogler09,RabusseauBalleCohen16}.
We consider nonnegative WTAs on a fixed binary tree, whose normalized
weights define distributions over leaf labelings. The weight of each
labeling can be computed bottom-up, but TV distance involves summing
absolute differences over potentially exponentially many labelings.
The bilinear transitions of these models allow us to apply the same
domain-reduction principle. We obtain an FPRAS for the TV distance
between the distributions induced by two such automata on the same
fixed binary tree.

\paragraph{Formalization}: We formalized our result for mixtures as well as smooth structured decomposable probabilsitic circuits in Lean4. For our Lean formalization, we also formalized the sparsification using lewis weights. In particular, the entire formalization is end to end and only depends on the classical axioms of Lean. The end-to-end formalization is available at \url{https://github.com/meelgroup/mixtureslean}

\subsection{Technical overview}
\label{sec:technical-overview}

We first explain the main idea for two product distributions
$P,Q$ over $\Omega^n$.  Write
$P(x)=\prod_{t=1}^n P_t(x_t)$ and
$Q(x)=\prod_{t=1}^n Q_t(x_t)$.
For each coordinate $t$ and value $x_t\in\Omega$, define the local
feature vector
$r_t(x_t)=[P_t(x_t),Q_t(x_t)]^\top$.
For a prefix $x_{\leq t}=(x_1,\ldots,x_t)$, let
$R_{\leq t}(x_{\leq t})
=\bigotimes_{j=1}^t r_j(x_j)$,
where $\otimes$ denotes component-wise multiplication.  Thus the two
coordinates of $R_{\leq t}(x_{\leq t})$ are the probabilities of the
prefix under $P$ and $Q$, respectively.  At the end, writing
$R(x)=R_{\leq n}(x)$ and $w=(1,-1)^\top$, we have
\[
    2d_{\mathrm{TV}}(P,Q)
    =
    \sum_{x\in\Omega^n}
    |w^\top R(x)|.
\]

The main difficulty to overcome is that the number of prefixes grows exponentially with
$t$.  
Instead of storing all of them, after iteration $t$ our approach is to maintain a small weighted set
$C_t=\{(z,\mu,v)\}$, where $z$ is a retained prefix,
$v=R_{\leq t}(z)$, and $\mu>0$ is its weight.  For a $\mu$-weighted set $C$
and a query vector $y$, write
$E(C,y)=\sum_{(z,\mu,v)\in C}\mu|y^\top v|$.

The key point is to understand what information about the prefix domain
must be preserved.  For a future assignment
$x_{>t}=(x_{t+1},\ldots,x_n)$, define
$S_{>t}(x_{>t})
=\bigotimes_{j=t+1}^n r_j(x_j)$.
Then
\[
    P(x)-Q(x)
    =
    \bigl(w\otimes S_{>t}(x_{>t})\bigr)^\top
    R_{\leq t}(x_{\leq t}).
\]
Thus every possible suffix induces a linear query
$y=w\otimes S_{>t}(x_{>t})$ on the current prefix features.

Crucially, at iteration $t$ the future multiplier
$S_{>t}(x_{>t})$ has not yet been computed: the remaining coordinates have not yet
been incorporated into the coreset.  
We therefore cannot tailor the
compression to a particular future query.  
However, we can leverage the fact that whatever $S_{>t}(x_{>t})$ turns out to be, it is \textit{some} linear query. 
As a result, we ask for the stronger guarantee
\[
    E(C_t,y)
    \approx
    \sum_{x_{\leq t}\in\Omega^t}
    |y^\top R_{\leq t}(x_{\leq t})|
    \qquad
    \text{simultaneously for \textit{every} }y\in\mathbb{R}^2.
\]
This uniform guarantee therefore automatically covers every query that may arise from any choice of the future coordinates.

In the terminology of dimensionality reduction, 
such a compression is precisely an $\ell_1$ subspace embedding.  Given
$C_{t-1}$, we first extend every retained prefix by every
$x_t\in\Omega$.  If $(z,\mu,v)\in C_{t-1}$, the extension $(z,x_t)$
has feature vector $v\otimes r_t(x_t)$ and inherits weight $\mu$.
Let $U_t$ denote the resulting candidate set.  If $A_t$ is the matrix
whose rows are the weighted feature vectors of $U_t$, then
$E(U_t,y)=\|A_ty\|_1$.  Sampling rows according to their $\ell_1$
Lewis weights produces a much smaller weighted set $C_t$ satisfying
\[
    (1-\delta)E(U_t,y)
    \leq
    E(C_t,y)
    \leq
    (1+\delta)E(U_t,y)
    \qquad
    \text{for all }y.
\]
The size of $C_t$ is polynomial in $1/\delta$ and the feature
dimension, and is independent of the exponentially large number of
prefixes represented by it.

\paragraph{Mixtures of product distributions.}
\begin{figure}[t]
\centering
\resizebox{0.97\linewidth}{!}{%
\begin{tikzpicture}[
    >=Stealth,
    font=\small,
    scale=0.92,
    transform shape,
    sum/.style={
        circle, draw,
        inner sep=1pt,
        minimum size=5.5mm
    },
    prod/.style={
        rectangle, draw,
        inner sep=1pt,
        minimum width=5.5mm,
        minimum height=4.6mm
    },
    leaf/.style={
        rounded corners, draw,
        inner sep=1.5pt,
        minimum width=10mm,
        minimum height=4.8mm,
        align=center
    },
    vtnode/.style={
        circle, draw,
        inner sep=0pt,
        minimum size=4.2mm
    },
    edgelabel/.style={
        font=\scriptsize,
        fill=white,
        inner sep=0.8pt
    },
    scopelabel/.style={
        font=\scriptsize
    },
    cleaf/.style={
    rounded corners,
    draw,
    inner xsep=1pt,
    inner ysep=0.5pt,
    minimum width=0pt,
    minimum height=4mm,
    font=\footnotesize,
    align=center
}
]

% =========================================================
% LEFT PANEL: probabilistic circuit
% =========================================================

% Root mixture
\node[sum] (root) at (0,4.8) {$+$};

% First product layer
\node[prod] (p1) at (-2.25,3.65) {$\times$};
\node[prod] (p2) at ( 2.25,3.65) {$\times$};

\draw[->] (root) -- node[edgelabel,pos=.45,left] {$\alpha$} (p1);
\draw[->] (root) -- node[edgelabel,pos=.45,right] {$1-\alpha$} (p2);

% X1 leaves and {X2,X3} mixture nodes
\node[cleaf] (x1a) at (-3.45,2.45) {$f_1(X_1)$};
\node[sum]  (s1)  at (-1.05,2.45) {$+$};

\node[cleaf] (x1b) at ( 1.05,2.45) {$f_2(X_1)$};
\node[sum]  (s2)  at ( 3.45,2.45) {$+$};

\draw[->] (p1) -- (x1a);
\draw[->] (p1) -- (s1);

\draw[->] (p2) -- (x1b);
\draw[->] (p2) -- (s2);

% Lower product gates
\node[prod] (q11) at (-1.70,1.15) {$\times$};
\node[prod] (q12) at (-0.40,1.15) {$\times$};

\node[prod] (q21) at ( 2.80,1.15) {$\times$};
\node[prod] (q22) at ( 4.10,1.15) {$\times$};

\draw[->] (s1) -- node[edgelabel,pos=.45,left] {$\beta$} (q11);
\draw[->] (s1) -- node[edgelabel,pos=.45,right] {$1-\beta$} (q12);

\draw[->] (s2) -- node[edgelabel,pos=.45,left] {$\gamma$} (q21);
\draw[->] (s2) -- node[edgelabel,pos=.45,right] {$1-\gamma$} (q22);

% Bottom leaves
\node[cleaf] (a21) at (-2.20,-0.05) {$g_1(X_2)$};
\node[cleaf] (a31) at (-1.15,-0.05) {$h_1(X_3)$};

\node[cleaf] (a22) at ( 0.10,-0.05) {$g_2(X_2)$};
\node[cleaf] (a32) at ( 1.15,-0.05) {$h_2(X_3)$};

\node[cleaf] (b21) at ( 2.30,-0.05) {$g_3(X_2)$};
\node[cleaf] (b31) at ( 3.35,-0.05) {$h_3(X_3)$};

\node[cleaf] (b22) at ( 4.60,-0.05) {$g_4(X_2)$};
\node[cleaf] (b32) at ( 5.65,-0.05) {$h_4(X_3)$};

\draw[->] (q11) -- (a21);
\draw[->] (q11) -- (a31);

\draw[->] (q12) -- (a22);
\draw[->] (q12) -- (a32);

\draw[->] (q21) -- (b21);
\draw[->] (q21) -- (b31);

\draw[->] (q22) -- (b22);
\draw[->] (q22) -- (b32);

% Optional scope annotations
\node[scopelabel,anchor=east] at (-1.5,4.8) {$\{X_1,X_2,X_3\}$};
\node[scopelabel,anchor=east] at (-1.2,2.45) {$\{X_2,X_3\}$};
\node[scopelabel,anchor=west] at (4.05,2.45) {$\{X_2,X_3\}$};

\node at (1.1,-0.95) {(a) A structured probabilistic circuit};

% =========================================================
% RIGHT PANEL: v-tree
% =========================================================

\node[vtnode] (vr) at (8.7,3.95) {};
\node[scopelabel,anchor=west] at (8.95,3.95) {$\{X_1,X_2,X_3\}$};

\node[leaf] (vx1) at (7.30,2.35) {$X_1$};
\node[vtnode] (v23) at (10.10,2.35) {};
\node[scopelabel,anchor=west] at (10.35,2.35) {$\{X_2,X_3\}$};

\node[leaf] (vx2) at (9.30,0.80) {$X_2$};
\node[leaf] (vx3) at (10.90,0.80) {$X_3$};

\draw (vr) -- (vx1);
\draw (vr) -- (v23);
\draw (v23) -- (vx2);
\draw (v23) -- (vx3);

\node at (9.1,-0.95) {(b) The corresponding v-tree};

\end{tikzpicture}%
}
\caption{
A structured probabilistic circuit and a v-tree that it respects.
Sum gates compute weighted mixtures, while product gates multiply
functions on disjoint variable sets. The root of the v-tree represents
the partition $\{X_1,X_2,X_3\}=\{X_1\}\sqcup\{X_2,X_3\}$, and the lower
internal node represents $\{X_2,X_3\}=\{X_2\}\sqcup\{X_3\}$. Every
product gate in the circuit follows one of these two partitions.
}
\label{fig:pc-vtree}
\end{figure}

The same idea extends directly to mixtures.  Suppose
$P=\sum_{i=1}^{k_1}\alpha_iP_i$ and
$Q=\sum_{i=1}^{k_2}\beta_iQ_i$, where every component is a product
distribution on $\Omega^n$, and let $k=k_1+k_2$. We now take
\[
    r_t(x_t)
    =
    [P_{1,t}(x_t),\ldots,P_{k_1,t}(x_t),
      Q_{1,t}(x_t),\ldots,Q_{k_2,t}(x_t)]^\top
\]
and
$w=[\alpha_1,\ldots,\alpha_{k_1},-\beta_1,\ldots,-\beta_{k_2}]^\top$.
The accumulated feature vector is again
$R_{\leq t}(x_{\leq t})
=\bigotimes_{j=1}^t r_j(x_j)$, and
$w^\top R(x)=P(x)-Q(x)$.

The argument above is unchanged, except that the query vectors now lie
in $\mathbb{R}^{k}$.
In particular, at iteration $t$ the future
multiplier $S_{>t}$ is still as yet unknown, so the coreset must preserve
$E(C_t,y)$ simultaneously for every $y\in\mathbb{R}^{k}$.
Since $\ell_1$ Lewis-weight sampling produces a coreset whose size is
polynomial in the feature dimension, the number of retained prefixes is
polynomial in $k$.  This is the step that removes the exponential
dependence on the number of mixture components in previous approaches.

\paragraph{Probabilistic circuits.}
\begin{figure}[t]
\centering
\begin{tikzpicture}[
    >=stealth,
    sum/.style={circle, draw, inner sep=1pt, minimum size=6mm},
    prod/.style={rectangle, draw, inner sep=1pt, minimum size=5mm},
    leaf/.style={rounded corners, draw, inner sep=2pt, minimum width=12mm, minimum height=5mm},
    good/.style={draw=black},
    bad/.style={draw=black},
    font=\small
]

% ------------------------------------------------
% (a) smooth sum gate
% ------------------------------------------------
\node[sum] (sa) at (0,1.8) {$+$};
\node[leaf] (sa1) at (-1,0.4) {$\{X_1,X_2\}$};
\node[leaf] (sa2) at (1,0.4) {$\{X_1,X_2\}$};
\draw[->] (sa) -- (sa1);
\draw[->] (sa) -- (sa2);
\node[align=center] at (0,-0.5) {(a) Smooth:\\ children have the same scope};

% ------------------------------------------------
% (b) non-smooth sum gate
% ------------------------------------------------
\node[sum] (sb) at (5,1.8) {$+$};
\node[leaf] (sb1) at (4,0.4) {$\{X_1\}$};
\node[leaf] (sb2) at (6,0.4) {$\{X_1,X_2\}$};
\draw[->] (sb) -- (sb1);
\draw[->] (sb) -- (sb2);
\node[align=center] at (5,-0.5) {(b) Not smooth:\\ scopes differ};

% ------------------------------------------------
% (c) decomposable product gate
% ------------------------------------------------
\node[prod] (pc) at (0,-2.2) {$\times$};
\node[leaf] (pc1) at (-1,-3.6) {$\{X_1\}$};
\node[leaf] (pc2) at (1,-3.6) {$\{X_2,X_3\}$};
\draw[->] (pc) -- (pc1);
\draw[->] (pc) -- (pc2);
\node[align=center] at (0,-4.8) {(c) Decomposable:\\ child scopes are disjoint};

% ------------------------------------------------
% (d) non-decomposable product gate
% ------------------------------------------------
\node[prod] (pd) at (5,-2.2) {$\times$};
\node[leaf] (pd1) at (4,-3.6) {$\{X_1,X_2\}$};
\node[leaf] (pd2) at (6,-3.6) {$\{X_2,X_3\}$};
\draw[->] (pd) -- (pd1);
\draw[->] (pd) -- (pd2);
\node[align=center] at (5,-4.8) {(d) Not decomposable:\\ child scopes overlap};

\end{tikzpicture}
\caption{
The two key structural conditions used in our circuit algorithm.
Smoothness concerns sum gates; decomposability concerns product gates.
}
\label{fig:smooth-decomp}
\end{figure}

The same idea extends beyond mixtures of products to a broader class of
structured probabilistic models.  A \emph{probabilistic circuit} is a
directed acyclic graph whose leaves are univariate distributions, whose
sum gates compute weighted sums of their children, and whose product
gates compute products of their children~\cite{ChoiVergariVanDenBroeck20,ZhangWangArenasVanDenBroeck25}.  
Thus sum gates represent
mixtures, while product gates represent factorizations over disjoint
variable sets.  We will consider two circuits over the same set of
variables $X_1,\ldots,X_n$.

To describe the structure of such a circuit, it is convenient to use a
\emph{v-tree}: a full binary tree whose leaves are in bijection with the
variables.  Each node of the v-tree corresponds to the set of variables
in the leaves of its subtree.  A circuit is \emph{structured} with
respect to a v-tree if every product gate decomposes its scope according
to the left/right partition induced by some internal v-tree node.  In
particular, if a v-tree node corresponds to a variable set
$S=S_h\sqcup S_\ell$, then every product gate associated with $S$
splits into one child over $S_h$ and one child over $S_\ell$. See Figure \ref{fig:pc-vtree}.

At each region $S$ of the common v-tree, we define a feature vector
$\Phi_S(x_S)$ that records the values of all gates of the two circuits
having scope $S$.  
As before, we seek a small weighted subset of
assignments to $S$ that preserves the $\ell_1$ norm of every linear
query on these feature vectors.

The structural conditions needed to achieve this are the standard ones from the probabilistic-circuit literature.  
\emph{Smoothness} means that the
children of each sum gate have the same scope, so a sum gate acts as a
linear transformation of the feature vector for that scope.  Hence sum
gates do not require any new domain reduction.  \emph{Decomposability}
means that the children of each product gate have disjoint scopes.
Therefore, at a product region $S=S_h\sqcup S_\ell$, the value of a
parent gate is a bilinear expression in the child feature vectors
$\Phi_{S_h}(x_h)$ and $\Phi_{S_\ell}(x_\ell)$. See Figure \ref{fig:smooth-decomp}.

This bilinear structure is exactly what allows the inductive step.  A
linear query on the parent features becomes a bilinear form in the two
child feature vectors; fixing either child turns it into a linear query
on the other.  Thus the two child coresets can be applied successively,
after which we sparsify the resulting Cartesian-product domain.  In this
way, the one-dimensional sequence of coordinates for mixtures of product
distributions is replaced by a bottom-up traversal of the common v-tree.

\subsection{Related work}
\label{sec:related-work}

\paragraph{TV distance for succinctly represented distributions.}
At the level of unrestricted succinct representations, TV-distance
approximation is already connected to central complexity classes.
Sahai and Vadhan~\cite{SV03} showed that \emph{Statistical Difference}---given two circuits sampling distributions, distinguish the case in which
their TV distance is small from the case in which it is  large---is complete
for $\mathsf{SZK}$. Earlier work of Goldreich, Sahai, and
Vadhan~\cite{GoldreichSV99} established closely related completeness results for
non-interactive statistical zero knowledge. These results indicate that
efficient TV approximation cannot be expected for arbitrary circuit
descriptions, motivating the study of structured representations for
which the problem becomes tractable. 

\ignore{For product distributions, Bhattacharyya et al.~\cite{BGMMPV23} proved
that exact TV computation is $\#\mathsf{P}$-complete, while also giving
approximation algorithms for restricted cases. Feng, Guo, Jerrum, and
Wang~\cite{FGJW23} subsequently gave an FPRAS for arbitrary product
distributions. Their algorithm uses a coupling-based importance-sampling
argument and remains polynomial-time even when the TV distance is very
small. Feng, Liu, and Liu~\cite{FLL24} gave a deterministic FPTAS based on
sparsification of the likelihood-ratio distribution, and extended their
techniques to Markov chains. Bhattacharyya et
al.~\cite{BGMMPV24} established a structure-preserving reduction from
relative TV approximation to probabilistic inference, yielding FPRASs for
same-structure Bayesian networks whenever inference can be performed
efficiently. Feng, Liu, and Yang~\cite{FLY25} obtained
relative-approximation algorithms for several classes of spin systems.}

In the introduction, we identified the recent sequence of works
(including \cite{bhattacharyya2025total, feng2023simple, feng2024soda})
on approximating the TV distance efficiently.
Most closely related to the present work are recent algorithms for
mixtures of product distributions. Feng, Fu, Yang, and
Zhang~\cite{feng2026computing, Fu26} give a relative-approximation algorithm
whose running time is polynomial for every fixed number of mixture
components, but exponential in the total number $k$ of components.
They explicitly identify polynomial dependence
on $k$ as an open problem. Our result resolves this question.

\paragraph{Distribution testing and additive approximation.}
There is also a large literature on testing whether two unknown
distributions are identical or close given sample access. Non-tolerant
closeness testing asks to distinguish
\[
    P=Q
    \qquad\text{from}\qquad
    d_{\mathrm{TV}}(P,Q)>\varepsilon,
\]
and has been extensively studied both for arbitrary distributions and
for structured classes such as product distributions
\cite{canonne2020testing,daskalakis2017square}. In \emph{tolerant} closeness testing, the goal is
instead to distinguish
\[
    d_{\mathrm{TV}}(P,Q)\leq \varepsilon_1
    \qquad\text{from}\qquad
    d_{\mathrm{TV}}(P,Q)\geq \varepsilon_2.
\]
This problem is closely related to additive approximation of TV distance:
an additive estimator immediately gives a tolerant tester, while
tolerant testers at a sequence of thresholds can be used to obtain an
additive estimate.

Tolerant identity and closeness testing for product distributions were
studied systematically by Bhattacharyya et al.~\cite{bhattacharyya2020efficient},
who showed how learning
algorithms can be converted into efficient additive TV-distance
estimators for several classes of structured high-dimensional
distributions, including Bayesian networks, Ising models, and Gaussian
distributions. For unrestricted distributions over a domain of size
$N$, the work of Valiant and Valiant~\cite{valiant2011estimating,valiant2017automatic} and subsequent
work gives nearly optimal estimators for TV distance and related
symmetric distributional functionals.

The present work concerns a different access model: the distributions are
given through their succinct parameters, and our goal is a
\emph{relative}, rather than additive, approximation. Relative
approximation is particularly demanding when $d_{\mathrm{TV}}(P,Q)$ is
very small: a generic additive estimator does not distinguish a distance
of $2^{-\Theta(n)}$ from zero. Our domain-reduction approach instead uses
the algebraic structure of the succinct representation to preserve such
small distances multiplicatively. 

\paragraph{Probabilistic circuits.}
Probabilistic circuits (PCs), also known in important special cases as
sum-product networks, are a tractable model class built from univariate
distributions using weighted sum and product gates.  Under structural
conditions such as smoothness and decomposability, they support exact
marginal inference in time linear in the circuit size; see, e.g.,
\cite{PoonDomingos11,PeharzTschiatschekPernkopf15,ChoiVergariVanDenBroeck20}.
A particularly relevant line of work studies which \emph{operations} on
circuits preserve tractability.  Vergari et al.~\cite{VergariEtAl21}
develop a compositional framework for tractable circuit operations and,
among other consequences, obtain polynomial-time algorithms for several
information-theoretic quantities, including KL divergence, under suitable
structural compatibility assumptions on the two circuits.  In their
framework, computing $D_{\mathrm{KL}}(P\|Q)$ is tractable when the two
circuits can be combined through the required product/quotient/logarithm
operations while remaining in a tractable PC class.  More recently,
Zhang et al.~\cite{ZhangWangArenasVanDenBroeck25} studied
\emph{restructuring} structured PCs, showing how to transform a circuit
to respect a target v-tree and thereby enabling further tractable
operations such as multiplication across different structured
decompositions.  Our result is different in flavor: rather than reducing
TV approximation to a tractable closure property of the circuit class, we
develop a domain-reduction technique that directly yields a relative
approximation algorithm for TV distance on pairs of smooth,
structured-decomposable circuits sharing a common v-tree.

\paragraph{$\ell_1$ row sampling and sparsification.}
Given a matrix
$A\in\mathbb{R}^{N\times d}$, an $\ell_1$ subspace embedding seeks a
reweighted subset of the rows such that
\[
    (1-\varepsilon)\|Ax\|_1
    \leq
    \|\widetilde A x\|_1
    \leq
    (1+\varepsilon)\|Ax\|_1
    \qquad\text{for every }x\in\mathbb{R}^d.
\]
This question has a long history in convex geometry.
Talagrand~\cite{Tal90} showed that
$O(d\varepsilon^{-2}\log d)$ appropriately reweighted rows suffice.
Cohen and Peng~\cite{cohen2015lp} gave an efficient randomized construction
based on $\ell_p$ Lewis weights, which may be viewed as the analogue of
statistical leverage scores for $\ell_p$ norms.  In particular, for
$p=1$, sampling and reweighting
$O(d\varepsilon^{-2}\log d)$ rows according to their Lewis weights
preserves $\|Ax\|_1$ simultaneously for every $x$ with high probability.
This machinery, together with efficient algorithms for approximating
Lewis weights, is the sparsification primitive used in our domain
reduction.

Very recently, Reis and Rothvoss~\cite{RR26} resolved the
existential sparsification problem optimally up to constants: every
$A\in\mathbb{R}^{N\times d}$ admits a reweighting supported on only
$O(d/\varepsilon^2)$ rows that preserves every $\ell_1$ norm
$\|Ax\|_1$ to within a $(1\pm\varepsilon)$ factor. Their construction for general
$\ell_1$ sparsification is not polynomial time, however; so it does not
directly replace the efficient Lewis-weight sampling needed in our
algorithm.
\section{Preliminaries}

\subsection{Problem Setup}
Let $P$ and $Q$ be mixture distributions over a discrete product space $\Omega^n = \prod_{t=1}^n \Omega_t$. We assume $P$ and $Q$ are mixtures of $k_1$ and $k_2$ product distributions, respectively, and write $k=k_1+k_2$ for the total number of components:
\begin{equation}
    P(x) = \sum_{i=1}^{k_1} \alpha_i P_i(x), \quad Q(x) = \sum_{i=1}^{k_2} \beta_i Q_i(x)
\end{equation}
where $\alpha, \beta$ are the mixture weights, and $P_i(x) = \prod_{t=1}^n P_{i,t}(x_t)$, $Q_i(x) = \prod_{t=1}^n Q_{i,t}(x_t)$. For any coordinate $t \in [n]$ and domain element $x_t \in \Omega_t$, we define the $k$-dimensional local feature vector $\mathbf{r}_t(x_t)$:
\begin{equation}
    \mathbf{r}_t(x_t) = \left[ P_{1,t}(x_t), \dots, P_{k_1,t}(x_t), \ Q_{1,t}(x_t), \dots, Q_{k_2,t}(x_t) \right]^T
\end{equation}
Let $\otimes$ denote component-wise multiplication (the Hadamard product). For any prefix assignment $x_{\le t} = (x_1, \dots, x_t) \in \prod_{j=1}^t \Omega_j$, the accumulated unnormalized feature vector is:
\begin{equation}
    \mathbf{R}_{\le t}(x_{\le t}) = \bigotimes_{j=1}^t \mathbf{r}_j(x_j)
\end{equation}
When $t=n$, we write $\mathbf{R}(x) := \mathbf{R}_{\le n}(x)$. We define the constant $k$-dimensional weight vector $\mathbf{w}$ containing the mixture coefficients:
\begin{equation}
    \mathbf{w} = \left[ \alpha_1, \dots, \alpha_{k_1}, \ -\beta_1, \dots, -\beta_{k_2} \right]^T
\end{equation}

By linearity, the inner product $\mathbf{w}^T \mathbf{R}(x)$ yields exactly $P(x) - Q(x)$. Thus, the Total Variation distance is the sum of the absolute linear projections over the entire domain:
\begin{equation}
    d_{\mathrm{TV}}(P, Q) = \frac{1}{2} \sum_{x \in \Omega^n} \left| P(x) - Q(x) \right| = \frac{1}{2} \sum_{x \in \Omega^n} \left| \mathbf{w}^T \mathbf{R}(x) \right|
\end{equation}

\subsection{\texorpdfstring{$L_1$}\ ~Subspace Embeddings and Domain Coresets}

To prevent the state space from growing exponentially as we iterate over the dimensions, we replace the exponentially large exact domain at step $t$ with a small weighted subset of assignments that preserves all relevant linear tests.

Let $\mathcal{C} = \{(z_i, \mu_i, \mathbf{v}_i)\}_{i=1}^N$ be a weighted set of domain assignments $z_i$ with associated scalar weights $\mu_i > 0$ and feature vectors $\mathbf{v}_i \in \mathbb{R}^d$. Let $A \in \mathbb{R}^{N \times d}$ be the matrix whose $i$-th row is $\mu_i \mathbf{v}_i^T$. For any query vector $\mathbf{u} \in \mathbb{R}^d$, the weighted sum of absolute projections is the $L_1$ norm of $A\mathbf{u}$:
\begin{equation}
    \sum_{i=1}^N \mu_i |\mathbf{u}^T \mathbf{v}_i| = \|A\mathbf{u}\|_1
\end{equation}

We rely on the following foundational theorem for dimensionality reduction in $L_1$ spaces \cite{cohen2015lp}.
\begin{theorem}[$L_1$ Subspace Embedding via Lewis Weights \cite{cohen2015lp}]
\label{thm:lewis_weights}
Given a matrix $A \in \mathbb{R}^{N \times d}$, a precision parameter $\delta \in (0, 1/2)$, and a failure probability $\eta \in (0, 1)$, there exists a randomized algorithm that outputs a non-negative diagonal sampling matrix $Y \in \mathbb{R}^{N \times N}$ with at most $m = O\left( \frac{d \log d}{\delta^2} \log\frac{1}{\eta} \right)$ non-zero entries. With probability at least $1 - \eta$, for all $\mathbf{u} \in \mathbb{R}^d$ simultaneously:
\begin{equation}
    (1-\delta) \|A\mathbf{u}\|_1 \le \|YA\mathbf{u}\|_1 \le (1+\delta) \|A\mathbf{u}\|_1
\end{equation}
Furthermore, in the arithmetic model, the algorithm computes the sampling matrix $Y$ using $\widetilde{O}(Nd + d^\omega)$ arithmetic operations (\cite[Theorem 5.3.1]{lee2016faster}; \cite[Lemma 2.5]{jambulapati2022improved}), where $\omega \leq 2.4$ is the matrix multiplication exponent.
\end{theorem}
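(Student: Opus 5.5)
This theorem is quoted from the literature (Cohen--Peng, with the Lewis-weight computation of Lee and Jambulapati et al.). So the plan is to reconstruct its proof from the standard ingredients rather than derive anything new.

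\textbf{Step 1: Lewis weights.} The $\ell_1$ Lewis weights of $A$ are the unique $w\in\mathbb{R}^N_{>0}$ satisfying
\[
w_i=\bigl(a_i^\top (A^\top W^{-1}A)^{-1}a_i\bigr)^{1/2},
\qquad \sum_i w_i=d,
\]
where $a_i$ is the $i$-th row and $W=\mathrm{diag}(w)$. Existence and uniqueness follow from a convex program (Lewis' change of density). Constant-factor approximations $\tilde w_i\ge w_i/2$ can be computed by the contraction iteration $w\mapsto (a_i^\top(A^\top W^{-1}A)^{-1}a_i)^{1/2}$. Each round estimates leverage scores of $W^{-1/2}A$ by a Johnson--Lindenstrauss sketch, which gives the $\widetilde O(Nd+d^\omega)$ arithmetic cost cited from \cite{lee2016faster,jambulapati2022improved}.

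\textbf{Step 2: Sampling.} Set $p_i=\min(1,\,C\,\tilde w_i\,\delta^{-2}\log(d/\eta))$. Keep row $i$ independently with probability $p_i$ and weight $1/p_i$, so $Y_{ii}\in\{0,1/p_i\}$. Then $\mathbb{E}\|YAu\|_1=\|Au\|_1$ for every $u$. Since $\sum_i \tilde w_i=O(d)$, a Chernoff bound shows the number of nonzeros is $O(d\delta^{-2}\log(d/\eta))$ with high probability. The stated $m$ can also be reached by taking $O(\log(1/\eta))$ independent repetitions at constant failure probability and keeping one that passes an approximate check, or by directly tuning the constants.

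\textbf{Step 3: Uniform concentration, the main obstacle.} We must show
\[
\sup_{\|Au\|_1=1}\bigl|\|YAu\|_1-1\bigr|\le\delta
\]
with probability $1-\eta$. The plan follows Talagrand and Cohen--Peng.
\begin{itemize}
\item Symmetrize to the Rademacher process $\sup_u\bigl|\sum_i \epsilon_i Y_{ii}|a_i^\top u|\bigr|$.
\item Use the contraction principle to remove the absolute values.
\item Change density by the Lewis weights, writing $a_i^\top u=w_i\,(W^{-1}A u)_i$. Lewis weights give the key row bound
\[
|a_i^\top u|\le w_i\,\|Au\|_1,
\]
proved by Cauchy--Schwarz in the $(A^\top W^{-1}A)^{-1}$ geometry. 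This makes every sampled term bounded by $O(\delta^2/\log(d/\eta))$.
\item Bound the resulting supremum of a Gaussian/Rademacher process over the $\ell_1$ ball of $\mathrm{col}(A)$ via Dudley's entropy integral. The required covering numbers come from dual Sudakov/Maurey estimates with $\log$ covering number $O(d\log d/t^2)$ at scale $t$, which produces the $d\log d$ factor.
\end{itemize}
This yields expected deviation $O(\delta)$. A Bernstein/Talagrand concentration inequality for suprema of bounded empirical processes then upgrades it to probability $1-\eta$, at the cost of the $\log(1/\eta)$ factor.

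The entropy estimate in Step 3 is the delicate part. Everything else is bookkeeping.
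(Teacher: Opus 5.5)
The paper gives no proof of this statement. It imports it, crediting Cohen--Peng for the sampling guarantee and Lee and Jambulapati et al.\ for the running time, so your sketch should be measured against those sources. Much of it matches them and is correct:
the $\ell_1$ Lewis-weight fixed point with $\sum_i w_i=d$;
the contraction iteration using sketched leverage scores of $W^{-1/2}A$;
unbiased sampling with weights $1/p_i$;
symmetrization, followed by contraction to a linear process;
the row bound $|a_i^\top u|\le w_i\|Au\|_1$ (Cauchy--Schwarz plus the self-bounding estimate $\|W^{-1/2}Au\|_2\le\|Au\|_1$);
and a Bousquet/Talagrand tail bound with per-term bound of order $\delta^2/\log(d/\eta)$, which yields the $\log(1/\eta)$ dependence.

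The gap is the quantitative core of Step 3. With an entropy bound of the form you state, $\log N(t)\lesssim K/t^{2}$ with $K\asymp d\log d$, Dudley's integrand is $\sqrt{K}/t$, and the integral diverges logarithmically. Truncating it and switching to the volumetric bound $d\log(1/t)$ at small scales costs an extra factor $\log(t_{\max}/t_{\min})$, which is of order $\log d$ here. So the argument as written gives an expected deviation of roughly $\log d\cdot\sqrt{d\log d/m}$. That forces $m$ to exceed $d\log d/\delta^{2}$ by additional powers of $\log d$, so you do not reach the stated bound. Getting $d\log d$ is exactly what Talagrand's sharper $\ell_1$ argument (behind the cited result) achieves, and it is not plain Dudley chaining.

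Two further points are skipped. First, the dual-Sudakov covering estimate involves the logarithm of the number of sampled rows, i.e.\ $\log m$ rather than $\log d$, and this brings in $\log(1/\delta)$. Second, the canonical metric of the process is controlled by $\sup_u\|YAu\|_1/\|Au\|_1$, which is the very random quantity being bounded, so a self-bounding (quadratic-inequality) step is needed.

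Your fallback amplification, ``repeat and keep a sample that passes an approximate check'', is also not usable. Certifying a $(1\pm\delta)$ $\ell_1$ embedding means maximizing a convex function such as $\|YAu\|_1$ over $\{u:\|Au\|_1\le 1\}$, and no efficient general test is known. The direct $\log(d/\eta)$ oversampling you also propose is the right route.

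For the paper's algorithms, any $\mathrm{poly}(d,1/\delta,\log(1/\eta))$ sample size would suffice. The loss therefore affects only the explicit bound on $m$ in the statement, not the FPRAS results downstream.
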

Applying the sampling matrix $Y$ to $A$ selects a sparse subset of at most $m$ assignments with updated weights $\tilde{\mu}_i = Y_{ii} \mu_i$, certifying a uniform relative error of $(1 \pm \delta)$ across all possible linear projections.

\section{Algorithm and Analysis}

We first formalize the algorithm for the case of mixtures of product distributions. At each step $t$, we extend the surviving prefix assignments by one coordinate and sparsify the resulting domain coreset.

\subsection{The FPRAS Algorithm}

Let $\textsc{Sparsify}(\mathcal{U}, m, \eta')$ be a subroutine that takes a candidate assignment coreset $\mathcal{U}$, forms its feature matrix, computes the $L_1$ Lewis weights, and returns a subsampled coreset $\mathcal{C}$ of size at most $m$, succeeding with probability at least $1 - \eta'$.

\begin{algorithm}[H]
\caption{FPRAS for Mixture TV via Assignment Coresets}
\label{alg:fptas}
\begin{algorithmic}[1]
\REQUIRE Marginals for $P, Q$ over $\prod_{t=1}^n \Omega_t$, relative accuracy $\epsilon \in (0, 1)$, failure probability $\eta > 0$.
\STATE Set step-wise error $\delta = \frac{\epsilon}{3n}$ and step-wise failure rate $\eta' = \frac{\eta}{n}$.
\STATE Set sample size $m = \Theta \left( \frac{k \log k}{\delta^2} \log\left(\frac{1}{\eta'}\right) \right)$.
\STATE Initialize root coreset $\mathcal{C}_0 = \{ (\emptyset, 1, \mathbf{1}_{k}) \}$.
\FOR{$t = 1$ \TO $n$}
    \STATE Initialize candidate extension domain $\mathcal{U}_t = \emptyset$.
    \FOR{each assignment state $(z, \mu, \mathbf{v}) \in \mathcal{C}_{t-1}$}
        \FOR{each domain element $x_t \in \Omega_t$}
            \STATE Form extended prefix assignment: $u = (z, x_t)$.
            \STATE Compute feature vector: $\mathbf{u} = \mathbf{v} \otimes \mathbf{r}_t(x_t)$.
            \STATE Add $(u, \mu, \mathbf{u})$ to $\mathcal{U}_t$.  \hfill $\triangleright$ Weight $\mu$ is directly inherited
        \ENDFOR
    \ENDFOR
    \STATE $\mathcal{C}_t \leftarrow \textsc{Sparsify}(\mathcal{U}_t, m, \eta')$.
\ENDFOR
\RETURN $\tilde{D} = \frac{1}{2} \sum_{(z, \mu, \mathbf{v}) \in \mathcal{C}_n} \mu \left| \mathbf{w}^T \mathbf{v} \right|$.
\end{algorithmic}
\end{algorithm}

\subsection{Analysis}

\begin{theorem}[Correctness and Complexity]
\label{thm:main_fptas}
Given $\epsilon, \eta \in (0, 1)$, Algorithm \ref{alg:fptas} outputs a value $\tilde{D}$ such that:
\begin{equation}
    (1 - \epsilon) d_{\mathrm{TV}}(P, Q) \le \tilde{D} \le (1 + \epsilon) d_{\mathrm{TV}}(P, Q)
\end{equation}
with probability at least $1 - \eta$. The algorithm runs in time polynomial in $n, k, 1/\epsilon, \log(1/\eta)$, and the maximum domain size $\max_t |\Omega_t|$.
\end{theorem}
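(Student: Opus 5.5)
Our plan is to prove, by induction on $t$, an invariant stating that $\mathcal{C}_t$ approximates the full prefix domain uniformly over all linear queries. For a weighted set $\mathcal{C}$ and $\mathbf{y}\in\mathbb{R}^k$ write $E(\mathcal{C},\mathbf{y})=\sum_{(z,\mu,\mathbf{v})\in\mathcal{C}}\mu|\mathbf{y}^T\mathbf{v}|$, and let $F_t(\mathbf{y})=\sum_{x_{\le t}}|\mathbf{y}^T\mathbf{R}_{\le t}(x_{\le t})|$. The invariant is
\[
(1-\delta)^t F_t(\mathbf{y})\le E(\mathcal{C}_t,\mathbf{y})\le (1+\delta)^t F_t(\mathbf{y})\quad\text{for all }\mathbf{y}\in\mathbb{R}^k .
\]
The base case $t=0$ holds with equality, since $\mathcal{C}_0$ consists of the single element with weight $1$ and feature vector $\mathbf{1}_k=\mathbf{R}_{\le 0}$.

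For the inductive step, we first analyze the extension $\mathcal{U}_t$, which involves no randomness. Using $\mathbf{y}^T(\mathbf{v}\otimes\mathbf{r}_t(x_t))=(\mathbf{y}\otimes\mathbf{r}_t(x_t))^T\mathbf{v}$, we obtain
\[
E(\mathcal{U}_t,\mathbf{y})=\sum_{x_t\in\Omega_t}E\bigl(\mathcal{C}_{t-1},\mathbf{y}\otimes\mathbf{r}_t(x_t)\bigr).
\]
In the same way, $F_t(\mathbf{y})=\sum_{x_t}F_{t-1}(\mathbf{y}\otimes\mathbf{r}_t(x_t))$. Each summand is a linear query on level $t-1$, so the inductive hypothesis applies to every summand simultaneously. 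This is exactly the point where the guarantee must hold for all queries: the queries $\mathbf{y}\otimes\mathbf{r}_t(x_t)$ are not known when $\mathcal{C}_{t-1}$ is built. Summing over $x_t$ transfers the $(1\pm\delta)^{t-1}$ bounds to $\mathcal{U}_t$.

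We then apply Theorem~\ref{thm:lewis_weights} to the matrix of $\mathcal{U}_t$, whose rows are $\mu\mathbf{u}^T$ and whose dimension is $d=k$. With the stated $m$ and failure probability $\eta'$, this gives $E(\mathcal{C}_t,\mathbf{y})\in(1\pm\delta)E(\mathcal{U}_t,\mathbf{y})$ for all $\mathbf{y}$, which completes the induction. A union bound over the $n$ sparsification steps gives total failure probability at most $n\eta'=\eta$. At $t=n$ we take $\mathbf{y}=\mathbf{w}$, so $\tilde D=\tfrac12E(\mathcal{C}_n,\mathbf{w})$ lies between $(1-\delta)^n d_{\mathrm{TV}}$ and $(1+\delta)^n d_{\mathrm{TV}}$. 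With $\delta=\epsilon/(3n)$ we have $(1+\delta)^n\le e^{\epsilon/3}\le 1+\epsilon$ for $\epsilon<1$, and $(1-\delta)^n\ge 1-n\delta\ge 1-\epsilon$.

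For the running time, each $\mathcal{C}_{t-1}$ has at most $m$ elements, so $|\mathcal{U}_t|\le m\max_t|\Omega_t|$. Forming $\mathcal{U}_t$ costs $O(mk\max|\Omega_t|)$. By Theorem~\ref{thm:lewis_weights}, sparsifying costs $\widetilde O(|\mathcal{U}_t|k+k^\omega)$. Here $m=O(n^2k\log k\,\epsilon^{-2}\log(n/\eta))$, and over $n$ rounds the total is polynomial in $n,k,1/\epsilon,\log(1/\eta),\max|\Omega_t|$. Computing the final sum takes $O(mk)$.

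The main obstacle is the inductive step, and specifically making sure the guarantee is used uniformly in the query. Two points need care. First, the conditioning: $\mathcal{C}_{t-1}$ is random, and the sparsifier's success is defined relative to the realized $\mathcal{U}_t$. Theorem~\ref{thm:lewis_weights} applies to every fixed input matrix, so conditioning on the previous steps and then taking a union bound is valid. Second, the Hadamard-transpose identity must let the inductive hypothesis absorb arbitrary future queries. Since the identity is exact, the rest of the argument is routine.
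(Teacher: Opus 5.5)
Your proof is correct. It differs from the paper's only in how the induction is organized.

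The paper never shows that $\mathcal{C}_t$ approximates the whole prefix domain. Instead it defines a single hybrid scalar $F_t=\frac12\sum_{x_{>t}}E(\mathcal{C}_t,\mathbf{w}\otimes\mathbf{S}_{>t}(x_{>t}))$, which pairs the sparsified prefix with the exact suffix. It checks that the unsparsified extension $\mathcal{U}_t$, summed over suffixes, reproduces $F_{t-1}$ exactly. It then applies the one-step Lewis guarantee to the queries $\mathbf{w}\otimes\mathbf{S}_{>t}(x_{>t})$ to get $F_t\in(1\pm\delta)F_{t-1}$, and telescopes from $F_0=d_{\mathrm{TV}}(P,Q)$ to $F_n=\tilde D$.

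You carry a stronger invariant: $\mathcal{C}_t$ is a $(1\pm\delta)^t$ domain reduction for $\Omega_{\le t}$ against every query $\mathbf{y}$. The new coordinate is absorbed into the query $\mathbf{y}\otimes\mathbf{r}_t(x_t)$ rather than into the suffix multiplier. The Hadamard identity and the reliance on uniformity in the query are the same in both.

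The hybrid argument is a little lighter, since it tracks only one scalar potential aimed at the single target $\mathbf{w}$. Your invariant is more modular. It certifies each $\mathcal{C}_t$ for arbitrary downstream linear functionals, and it still makes sense when the linear chain of coordinates becomes a tree with no single "suffix." That is why the paper itself switches to your style of argument for probabilistic circuits (Lemma~\ref{lem:pc_invariant}). Your explicit remark about conditioning on the realized $\mathcal{U}_t$ before the union bound makes precise a point the paper leaves implicit.
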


We define the evaluation functional for any assignment coreset $\mathcal{C} = \{(z_i, \mu_i, \mathbf{v}_i)\}$ and any query vector $\mathbf{y} \in \mathbb{R}^{k}$:
\begin{equation}\label{eq:evaluation}
    E(\mathcal{C}, \mathbf{y}) := \sum_{(z, \mu, \mathbf{v}) \in \mathcal{C}} \mu \left| \mathbf{y}^T \mathbf{v} \right|
\end{equation}

\begin{lemma}[One-Step Coreset Guarantee]
\label{lem:one_step}
With probability at least $1 - \eta'$, the $\textsc{Sparsify}$ subroutine succeeds at step $t$. Conditioned on this success, for the candidate extension $\mathcal{U}_t$ and the reduced coreset $\mathcal{C}_t$, it holds simultaneously for all $\mathbf{y} \in \mathbb{R}^{k}$ that:
\begin{equation}
    (1 - \delta) E(\mathcal{U}_t, \mathbf{y}) \le E(\mathcal{C}_t, \mathbf{y}) \le (1 + \delta) E(\mathcal{U}_t, \mathbf{y})
\end{equation}
\end{lemma}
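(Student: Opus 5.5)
The plan is to reduce the claim directly to Theorem~\ref{thm:lewis_weights}, applied to the feature matrix of $\mathcal{U}_t$. Write $\mathcal{U}_t=\{(u_i,\mu_i,\mathbf{u}_i)\}_{i=1}^N$ with $N=|\mathcal{C}_{t-1}|\cdot|\Omega_t|$. Let $A_t\in\mathbb{R}^{N\times k}$ be the matrix whose $i$-th row is $\mu_i\mathbf{u}_i^T$. Since each $\mu_i>0$, the identity from the preliminaries gives
\[
E(\mathcal{U}_t,\mathbf{y})=\sum_i \mu_i|\mathbf{y}^T\mathbf{u}_i|=\|A_t\mathbf{y}\|_1 \quad\text{for every }\mathbf{y}\in\mathbb{R}^k.
\]
$\textsc{Sparsify}(\mathcal{U}_t,m,\eta')$ invokes the Lewis-weight sampler of Theorem~\ref{thm:lewis_weights} on $A_t$ with $d=k$, precision $\delta$, and failure probability $\eta'$. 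The constant hidden in the choice of $m$ in Algorithm~\ref{alg:fptas} is taken large enough to dominate the bound $O\!\left(\frac{k\log k}{\delta^2}\log\frac1{\eta'}\right)$ on the number of nonzero entries of $Y$. Also $\delta=\epsilon/(3n)<1/2$, since $\epsilon<1$ and $n\ge1$. So the hypotheses of the theorem hold. With probability at least $1-\eta'$, the output $Y$ satisfies
\[
(1-\delta)\|A_t\mathbf{y}\|_1\le\|YA_t\mathbf{y}\|_1\le(1+\delta)\|A_t\mathbf{y}\|_1 \quad\text{for all }\mathbf{y}.
\]
We call this event the success of $\textsc{Sparsify}$ at step $t$.

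Next, identify $\|YA_t\mathbf{y}\|_1$ with $E(\mathcal{C}_t,\mathbf{y})$. By construction, $\mathcal{C}_t$ consists of the triples $(u_i,\,Y_{ii}\mu_i,\,\mathbf{u}_i)$ for the indices with $Y_{ii}>0$. Assignments and feature vectors are copied unchanged; only the weights are rescaled. Since $Y$ is diagonal and nonnegative,
\[
\|YA_t\mathbf{y}\|_1=\sum_i Y_{ii}\mu_i|\mathbf{y}^T\mathbf{u}_i|=\sum_{(z,\tilde\mu,\mathbf{v})\in\mathcal{C}_t}\tilde\mu|\mathbf{y}^T\mathbf{v}|=E(\mathcal{C}_t,\mathbf{y}).
\]
Rows with $Y_{ii}=0$ contribute nothing, so dropping them is harmless. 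All retained weights are positive, so $\mathcal{C}_t$ is a valid weighted set in the sense of the preliminaries, of size at most $m$. Substituting this identity and $E(\mathcal{U}_t,\mathbf{y})=\|A_t\mathbf{y}\|_1$ into the embedding inequality yields the claim, simultaneously for all $\mathbf{y}\in\mathbb{R}^k$.

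One subtlety is the randomness. $\mathcal{U}_t$ itself depends on the earlier random choices through $\mathcal{C}_{t-1}$. We therefore apply Theorem~\ref{thm:lewis_weights} conditionally on $\mathcal{C}_{t-1}$, i.e.\ for a fixed input matrix $A_t$, using only fresh randomness at step $t$. Because the failure bound $\eta'$ holds for every fixed $A_t$, it also holds unconditionally. This is what later allows a union bound over the $n$ steps.

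There is no real obstacle here. The only points needing care are the three just mentioned: checking the parameter regime ($\delta<1/2$ and $m$ large enough), the bookkeeping showing that reweighting by $Y$ is exactly the weight update in $\mathcal{C}_t$, and the conditioning argument for the per-step failure probability.
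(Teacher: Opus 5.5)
Your proposal is correct and follows the same route as the paper's proof: write $E(\mathcal{U}_t,\mathbf{y})=\|A\mathbf{y}\|_1$ for the weighted feature matrix, invoke Theorem~\ref{thm:lewis_weights} with parameters $\delta$ and $\eta'$, and identify the norm of the reweighted matrix with $E(\mathcal{C}_t,\mathbf{y})$. Your extra checks are sound details that the paper leaves implicit: that $\delta<1/2$, that $m$ dominates the sampling bound, that reweighting by $Y$ matches the weight update, and that the failure bound holds conditionally on $\mathcal{C}_{t-1}$ via fresh randomness.
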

\begin{proof}
Let $A$ be the matrix whose rows are $\mu \mathbf{v}^T$ for $(u, \mu, \mathbf{v}) \in \mathcal{U}_t$. By definition, $E(\mathcal{U}_t, \mathbf{y}) = \|A\mathbf{y}\|_1$. The $\textsc{Sparsify}$ routine applies Theorem \ref{thm:lewis_weights} with failure parameter $\eta'$, producing a reweighted subset matrix $\tilde{A}$ corresponding to the surviving assignments in $\mathcal{C}_t$. With probability at least $1 - \eta'$, the subspace embedding guarantee holds, ensuring $\|\tilde{A}\mathbf{y}\|_1 \in (1 \pm \delta) \|A\mathbf{y}\|_1$ for all $\mathbf{y}$, which corresponds exactly to $E(\mathcal{C}_t, \mathbf{y})$.
\end{proof}

To analyze global propagation, let $\Omega_{>t} = \prod_{j=t+1}^n \Omega_j$ denote the future domain. For any future sequence $x_{>t} \in \Omega_{>t}$, we define the exact future multiplier vector: $\mathbf{S}_{>t}(x_{>t}) = \bigotimes_{j=t+1}^n \mathbf{r}_j(x_j)$.

\begin{definition}[Hybrid Functional]
Let the accumulated TV distance evaluated from the domain coreset $\mathcal{C}_t$ at step $t$ be:
\begin{equation}
    F_t := \frac{1}{2} \sum_{x_{>t} \in \Omega_{>t}} E\Big(\mathcal{C}_t, \ \mathbf{w} \otimes \mathbf{S}_{>t}(x_{>t})\Big)
\end{equation}
\end{definition}
Note that $F_0 = d_{\mathrm{TV}}(P, Q)$ and $F_n = \tilde{D}$.

\begin{lemma}[Propagation of Error]
\label{lem:propagation}
Conditioned on $\textsc{Sparsify}$ succeeding at step $t$, the accumulated TV distance satisfies:
\begin{equation}
    (1 - \delta) F_{t-1} \le F_t \le (1 + \delta) F_{t-1}
\end{equation}
\end{lemma}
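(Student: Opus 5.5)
The plan is to show that $F_{t-1}$ equals the same suffix-summed functional evaluated on the candidate set $\mathcal{U}_t$. The bound then follows by applying Lemma~\ref{lem:one_step} inside the outer sum over suffixes. For a suffix $x_{>t}=(x_t, x_{>t}')$ with $x_t\in\Omega_t$ and $x_{>t}'\in\Omega_{>t}$, I will use the factorization $\mathbf{S}_{>t-1}(x_t,x_{>t}') = \mathbf{r}_t(x_t)\otimes \mathbf{S}_{>t}(x_{>t}')$. This holds by the definition of $\mathbf{S}$ as a Hadamard product, since $\otimes$ is commutative and associative.

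The first step is to rewrite $F_{t-1}$:
\[
2F_{t-1} = \sum_{x_t\in\Omega_t}\sum_{x_{>t}'\in\Omega_{>t}} \sum_{(z,\mu,\mathbf{v})\in\mathcal{C}_{t-1}} \mu\,\bigl|(\mathbf{w}\otimes\mathbf{r}_t(x_t)\otimes\mathbf{S}_{>t}(x_{>t}'))^T\mathbf{v}\bigr|.
\]
The key identity is $(\mathbf{a}\otimes\mathbf{b})^T\mathbf{v} = \mathbf{a}^T(\mathbf{b}\otimes\mathbf{v})$, which lets us move $\mathbf{r}_t(x_t)$ onto the feature vector. Applying it with $\mathbf{a}=\mathbf{w}\otimes\mathbf{S}_{>t}(x_{>t}')$ and $\mathbf{b}=\mathbf{r}_t(x_t)$, the summand becomes $\mu\,|(\mathbf{w}\otimes\mathbf{S}_{>t}(x_{>t}'))^T(\mathbf{v}\otimes\mathbf{r}_t(x_t))|$. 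Now I swap the finite sums. The pairs $((z,\mu,\mathbf{v}),x_t)$ are in bijection with the entries $(u,\mu,\mathbf{v}\otimes\mathbf{r}_t(x_t))$ of $\mathcal{U}_t$, which inherit the weight $\mu$. It follows that
\[
F_{t-1}=\tfrac12\sum_{x_{>t}'\in\Omega_{>t}} E\bigl(\mathcal{U}_t,\ \mathbf{w}\otimes\mathbf{S}_{>t}(x_{>t}')\bigr).
\]
One bookkeeping point: $\mathcal{U}_t$ should be treated as a multiset, or indexed by the pair, so that coincident extended prefixes are counted separately. Since distinct coreset entries have distinct prefixes $z$, each $(z,x_t)$ is in fact distinct, so this causes no trouble.

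The second step is to condition on the success of $\textsc{Sparsify}$ at step $t$. Lemma~\ref{lem:one_step} then gives
\[
(1-\delta)E(\mathcal{U}_t,\mathbf{y})\le E(\mathcal{C}_t,\mathbf{y})\le(1+\delta)E(\mathcal{U}_t,\mathbf{y})
\]
simultaneously for all $\mathbf{y}\in\mathbb{R}^k$. In particular it holds for each of the finitely many queries $\mathbf{y}=\mathbf{w}\otimes\mathbf{S}_{>t}(x_{>t}')$. Summing these inequalities over $x_{>t}'$ (all terms are nonnegative) and multiplying by $\tfrac12$ yields $(1-\delta)F_{t-1}\le F_t\le(1+\delta)F_{t-1}$, by the definition of $F_t$. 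When $t=n$, $\Omega_{>n}$ is a single empty assignment with $\mathbf{S}_{>n}=\mathbf{1}_k$, so the same argument applies.

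The only real content is the rewriting in the first step, which moves the factor $\mathbf{r}_t(x_t)$ from the query side to the feature side. The crucial feature is the uniformity over $\mathbf{y}$ in Lemma~\ref{lem:one_step}. The queries depend on future coordinates that the sparsifier never sees, so a guarantee for a fixed query would not suffice. The simultaneous guarantee makes the sum over exponentially many suffixes harmless.
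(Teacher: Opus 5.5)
Your proposal is correct and follows the paper's proof essentially step for step. The paper also first uses the Hadamard-product identity to show that $\frac12\sum_{x_{>t}}E(\mathcal{U}_t,\mathbf{w}\otimes\mathbf{S}_{>t}(x_{>t}))=F_{t-1}$ (it writes the computation starting from the $\mathcal{U}_t$ side and absorbs $\mathbf{r}_t(x_t)$ into $\mathbf{S}_{>t-1}$), and then applies the uniform guarantee of Lemma~\ref{lem:one_step} to each query $\mathbf{w}\otimes\mathbf{S}_{>t}(x_{>t})$ and sums over suffixes; your remarks on multiset bookkeeping and the $t=n$ case are harmless refinements the paper leaves implicit.
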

\begin{proof}
First, we evaluate the hybrid functional on the exact extension domain $\mathcal{U}_t$ prior to sparsification. By definition of $\mathcal{U}_t$ and the distributivity of the Hadamard product:
\begin{align}
    \frac{1}{2} \sum_{x_{>t}} E\Big(\mathcal{U}_t, \ \mathbf{w} \otimes \mathbf{S}_{>t}(x_{>t})\Big) 
    &= \frac{1}{2} \sum_{x_{>t}} \sum_{(z, \mu, \mathbf{v}) \in \mathcal{C}_{t-1}} \sum_{x_t \in \Omega_t} \mu \left| \big(\mathbf{w} \otimes \mathbf{S}_{>t}(x_{>t})\big)^T (\mathbf{v} \otimes \mathbf{r}_t(x_t)) \right| \nonumber \\
    &= \frac{1}{2} \sum_{x_{\ge t} \in \Omega_{\ge t}} \sum_{(z, \mu, \mathbf{v}) \in \mathcal{C}_{t-1}} \mu \left| \big(\mathbf{w} \otimes \mathbf{S}_{>t-1}(x_{>t-1})\big)^T \mathbf{v} \right| \nonumber \\
    &= \frac{1}{2} \sum_{x_{>t-1}} E\Big(\mathcal{C}_{t-1}, \ \mathbf{w} \otimes \mathbf{S}_{>t-1}(x_{>t-1})\Big) \nonumber \\
    &= F_{t-1}
\end{align}
Second, we bound $F_t$. For any fixed $x_{>t}$, define the query vector $\mathbf{y} = \mathbf{w} \otimes \mathbf{S}_{>t}(x_{>t})$. By Lemma \ref{lem:one_step}, the sparsified coreset $\mathcal{C}_t$ satisfies $(1 - \delta) E(\mathcal{U}_t, \mathbf{y}) \le E(\mathcal{C}_t, \mathbf{y}) \le (1 + \delta) E(\mathcal{U}_t, \mathbf{y})$. Because the sum over $\Omega_{>t}$ is a strictly positive linear operation, summing these inequalities over all $x_{>t}$ yields $(1-\delta) F_{t-1} \le F_t \le (1+\delta) F_{t-1}$.
\end{proof}

\begin{proof}[Proof of Theorem \ref{thm:main_fptas} (Correctness)]
By the union bound, the algorithm succeeds across all $n$ steps with probability at least $1 - n \eta' = 1 - \eta$. Conditioned on success, we telescope Lemma \ref{lem:propagation}:
\begin{equation}
    (1 - \delta)^n F_0 \le F_n \le (1 + \delta)^n F_0
\end{equation}
Substituting $\delta = \epsilon / (3n)$, we have $(1 + \epsilon/3n)^n \le e^{\epsilon/3} \le 1 + \epsilon$, and $(1 - \epsilon/3n)^n \ge 1 - \epsilon$. Since $F_0 = d_{\mathrm{TV}}(P, Q)$ and $F_n = \tilde{D}$, the relative approximation bound holds.
\end{proof}

\begin{proof}[Proof of Theorem \ref{thm:main_fptas} (Complexity)]
At step $t$, let $m_t = |\mathcal{C}_{t-1}|$. The candidate extension domain $\mathcal{U}_t$ contains $N = m_t |\Omega_t|$ assignments with feature dimension $d = k$. Substituting $m = O\left( \frac{k \log k}{\delta^2} \log\frac{1}{\eta'} \right)$ and $\delta = \frac{\epsilon}{3n}$, and using the runtime bound stated in Theorem \ref{thm:lewis_weights}, the number of arithmetic operations per step is bounded by:
\begin{align}
    \widetilde{O} \left( \left(|\Omega_t| \cdot \frac{k \log k}{\delta^2} \log\frac{1}{\eta'}\right) k + k^\omega \right) 
    &= \widetilde{O} \left( |\Omega_t| \frac{n^2 k^2}{\epsilon^2} \log\left(\frac{n}{\eta}\right) + k^\omega \right)
\end{align}
Summing over the $n$ steps, the total number of arithmetic operations is bounded by a polynomial in $n$, $k$, $1/\epsilon$, $\log(1/\eta)$, and $\max_t |\Omega_t|$. This establishes the algorithm as an FPRAS in the arithmetic model.
\end{proof}

\section{Extension to Probabilistic Circuits}
\label{sec:pc}

We generalize the coreset state representation to estimate the Total Variation distance between two distributions $P$ and $Q$ computed by smooth, decomposable probabilistic circuits $C_P$ and $C_Q$ over finite-valued variables $X_1, \dots, X_n$, where $X_i\in\Omega_i$. Both circuits are structured with respect to the same v-tree, but their gates and wiring may otherwise differ. We allow arbitrary nonnegative univariate input functions given explicitly as lookup tables and arbitrary nonnegative sum-gate weights; internal subcircuits need not be normalized, while the two root outputs are required to be the normalized probability mass functions $P$ and $Q$.

\subsection{Circuit Setup and Domain Coresets}
For $R\in\{P,Q\}$ and every gate $g\in C_R$, let $R_g(x_{\mathrm{scope}(g)})$ denote the nonnegative subcircuit function computed at $g$. We identify each region $S$ of the common v-tree with its set of variable indices and write $\Omega_S:=\prod_{i\in S}\Omega_i$. Let $G_S^R$ be the set of gates in $C_R$ whose scope is exactly $S$. We define the region feature vector $\Phi_S(x_S) \in \mathbb{R}^{d_S}$ by concatenating the gate evaluations from the two circuits:
\begin{equation*}
    \Phi_S(x_S) := \left[ P_g(x_S) : g \in G_S^P, \ Q_g(x_S) : g \in G_S^Q \right]^T,
\end{equation*}
where $d_S = |G_S^P|+|G_S^Q|$. Let $W = \max_{R\in\{P,Q\}}\max_S |G_S^R|$ denote the maximum region width of either circuit, so $d_S\leq 2W$.

For an error parameter $\rho\in(0,1)$, a $\rho$-domain reduction for region $S$ is a weighted subset of assignments $\mathcal{C}_S = \{(z_j, \mu_j, \Phi_S(z_j))\}_{j=1}^{m_S}$ such that, for every query vector $a \in \mathbb{R}^{d_S}$,
\begin{equation*}
    (1-\rho)\sum_{x_S \in \Omega_S} \left| \langle a, \Phi_S(x_S) \rangle \right|
    \leq \sum_{j=1}^{m_S} \mu_j \left| \langle a, \Phi_S(z_j) \rangle \right|
    \leq (1+\rho)\sum_{x_S \in \Omega_S} \left| \langle a, \Phi_S(x_S) \rangle \right|.
\end{equation*}
Let $r_P$ and $r_Q$ be the output gates of $C_P$ and $C_Q$, respectively. The root feature vector $\Phi_{\mathrm{root}}(x)$ contains $P_{r_P}(x)$ and $Q_{r_Q}(x)$. Let $a_{\mathrm{TV}} \in \mathbb{R}^{d_{\mathrm{root}}}$ select the former coordinate with coefficient $1$ and the latter with coefficient $-1$. Then
\begin{equation*}
    d_{\mathrm{TV}}(P, Q) = \frac{1}{2} \sum_{x \in \Omega^n} |P_{r_P}(x) - Q_{r_Q}(x)| = \frac{1}{2} \sum_{x \in \Omega^n} \left| \langle a_{\mathrm{TV}}, \Phi_{\mathrm{root}}(x) \rangle \right|
\end{equation*}
Therefore the root coreset yields the estimator
\begin{equation*}
     \widetilde D:=\frac{1}{2} \sum_{j=1}^{m_{\mathrm{root}}} \mu_j \left| \langle a_{\mathrm{TV}}, \Phi_{\mathrm{root}}(z_j) \rangle \right|.
\end{equation*}
\subsection{Bottom-Up Coreset Construction}
Let $q:=\max_{i\in[n]}|\Omega_i|$ be the maximum leaf-domain size, and call a region of the common v-tree active if either circuit has a product gate at that region. At each active product region, all such gates are processed together and one joint sparsification step is performed. Let $L$ be the number of active product regions. If $L=0$, $\textsc{Sparsify}$ is never called and the TV-distance can be computed deterministically and exactly. For $L>0$, we set the local error $\delta = \frac{\epsilon}{3L}$ and failure probability $\eta' = \frac{\eta}{L}$. We construct $\mathcal{C}_S$ bottom-up along the common v-tree:

\paragraph{Leaf Gates.} For a leaf variable $X_i$ with domain $\Omega_i$, we enumerate the exact weighted assignment set $\mathcal{C}_{\{i\}} = \{(a, 1, \Phi_{\{i\}}(a)) : a \in \Omega_i\}$.

\paragraph{Sum Gates.} For $R\in\{P,Q\}$, let $g\in G_S^R$ be a sum gate with inputs $h_1,\dots,h_t$. Smoothness ensures $\mathrm{scope}(h_1)=\dots=\mathrm{scope}(h_t)=S$, and $R_g(x_S)=\sum_{j=1}^t\alpha_{g,j}^R R_{h_j}(x_S)$. Collecting the sum gates of both circuits therefore gives a block-diagonal linear transformation $L_S \in \mathbb{R}^{d_S \times d_S^{\mathrm{old}}}$ of the existing features:
\begin{equation*}
    \Phi_S^{\mathrm{new}}(x_S) = L_S \Phi_S^{\mathrm{old}}(x_S)
\end{equation*}
For any query $a$, we have $\langle a, L_S \Phi_S^{\mathrm{old}}(x_S) \rangle = \langle L_S^T a, \Phi_S^{\mathrm{old}}(x_S) \rangle$. Because $\mathcal{C}_S$ preserves all linear tests on $\Phi_S^{\mathrm{old}}$, it preserves all linear tests on $\Phi_S^{\mathrm{new}}$ without introducing additional error or domain expansion. When a sum layer follows a product layer at the same scope, the algorithm first sparsifies the product-stage features and then applies this linear map only to the surviving rows; no additional sparsification is performed between these two stages.

\paragraph{Product Gates.} Let $S_g = S_h \sqcup S_\ell$ be a product region formed by disjoint child scopes $S_h$ and $S_\ell$. All product gates of either circuit at this region use the split prescribed by the common v-tree and are processed together. Let $\Psi_{S_g}(x_{S_g})$ concatenate the values of these product gates in $C_P$ and $C_Q$ before any subsequent sum gates of scope $S_g$ are evaluated. Decomposability makes $\Psi_{S_g}$ a bilinear function of the two child feature vectors, and the complete feature vector $\Phi_{S_g}$ is obtained from $\Psi_{S_g}$ by a fixed block-diagonal linear map that also retains any required product-gate coordinates. Given reduced child domains $\mathcal{C}_{S_h}$ and $\mathcal{C}_{S_\ell}$, we form the candidate Cartesian product domain $\mathcal{U}_{S_g} = \mathcal{C}_{S_h} \times \mathcal{C}_{S_\ell}$ containing $N = m_{S_h} m_{S_\ell}$ pairs $z_{ij} = (u_i, v_j)$ with inherited weights $\mu_{ij} = \lambda_i \rho_j$. For every candidate pair, we evaluate $\Psi_{S_g}(z_{ij})$ and apply Theorem \ref{thm:lewis_weights} with parameters $\delta$ and $\eta'$. This produces a reweighted subset of size $m_{S_g} = O\left( \frac{W \log(2W)}{\delta^2} \log\frac{1}{\eta'} \right)$. Applying the same-scope linear map to the surviving rows yields the coreset $\mathcal C_{S_g}$.

\subsection{Approximation Guarantee and Complexity}

\begin{theorem}[Probabilistic Circuit TV Approximation]
\label{thm:pc_fpras}
Given $\epsilon, \eta \in (0, 1)$ and smooth, decomposable probabilistic circuits $C_P$ and $C_Q$ computing $P$ and $Q$, respectively, and structured with respect to a common v-tree, let $W$, $q$, and $L\geq1$ be as defined above. The bottom-up coreset propagation outputs the estimator $\widetilde{D}$ defined above, satisfying $(1-\epsilon)d_{\mathrm{TV}}(P,Q) \le \widetilde{D} \le (1+\epsilon)d_{\mathrm{TV}}(P,Q)$ with probability at least $1-\eta$. Let $M$ denote the maximum number of rows retained in any
internal-region coreset, and
let $|C|:=|C_P|+|C_Q|$, where each circuit size counts its gates and wires. The arithmetic running time is
\[
\widetilde{O}\!\left((q+M)|C|+L\left[W(q+M)^2+W^\omega\right]\right),
\quad\text{where }
M=O\!\left(
\frac{WL^2\log(2W)}{\epsilon^2}
\log\frac{L}{\eta}
\right).
\]
\end{theorem}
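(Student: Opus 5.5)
The plan is to prove, by induction bottom-up over the common v-tree, that every computed coreset $\mathcal C_S$ is a $\rho_S$-domain reduction for its region $S$. Here $(1\pm\rho_S)$ will be $(1\pm\delta)^{L_S}$, and $L_S$ is the number of active product regions inside the subtree of $S$. Applying this at the root with query $a_{\mathrm{TV}}$ gives $(1-\delta)^L d_{\mathrm{TV}}\le \widetilde D\le(1+\delta)^L d_{\mathrm{TV}}$. With $\delta=\epsilon/(3L)$, the same calculation as in the proof of Theorem~\ref{thm:main_fptas} then gives the $(1\pm\epsilon)$ bound. A union bound over the $L$ calls to \textsc{Sparsify}, each failing with probability at most $\eta'=\eta/L$, gives success probability $1-\eta$. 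The statement is two-sided, so I will track the lower and upper factors separately, $(1-\delta)^{L_S}$ and $(1+\delta)^{L_S}$, rather than a single $\rho$.

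The base case is the leaves, where $\mathcal C_{\{i\}}$ is exact. Sum layers are handled by the argument already given: a query $a$ on $L_S\Phi^{\mathrm{old}}$ equals the query $L_S^Ta$ on $\Phi^{\mathrm{old}}$, so no error is added. The same reasoning covers the fixed linear map from $\Psi_{S_g}$ to $\Phi_{S_g}$, since a guarantee for all queries on $\Psi$ transfers to all queries on $\Phi=M\Psi$ via $M^Ta$.

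The main step is a product region $S_g=S_h\sqcup S_\ell$. By decomposability, every coordinate of $\Psi_{S_g}(x_h,x_\ell)$ is a product of one coordinate of $\Phi_{S_h}(x_h)$ and one of $\Phi_{S_\ell}(x_\ell)$. Hence there is a bilinear map with $\langle a,\Psi_{S_g}(x_h,x_\ell)\rangle=\Phi_{S_h}(x_h)^T B_a\,\Phi_{S_\ell}(x_\ell)$ for a matrix $B_a$ depending only on $a$. I will then argue in two stages.

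First, fix $x_\ell$. Then $b=B_a\Phi_{S_\ell}(x_\ell)$ is a fixed linear query on $\Phi_{S_h}$, so the hypothesis on $\mathcal C_{S_h}$ gives
\[
\sum_i\lambda_i\,|\langle b,\Phi_{S_h}(u_i)\rangle| \in (1\pm\rho_h)\sum_{x_h}|\langle b,\Phi_{S_h}(x_h)\rangle| .
\]
Summing over $x_\ell$ preserves these inequalities, since all terms are nonnegative.

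Second, fix a retained $u_i$. Then $B_a^T\Phi_{S_h}(u_i)$ is a linear query on $\Phi_{S_\ell}$, so the hypothesis on $\mathcal C_{S_\ell}$ replaces the sum over $x_\ell$ by the weighted sum over the $v_j$ with factor $(1\pm\rho_\ell)$. Multiplying by $\lambda_i\ge0$ and summing over $i$ keeps the inequalities valid.

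Together these show that the Cartesian product $\mathcal U_{S_g}$ with weights $\lambda_i\rho_j$ is a reduction with factors $(1-\rho_h)(1-\rho_\ell)$ and $(1+\rho_h)(1+\rho_\ell)$. Since the subtrees are disjoint, $L_{S_h}+L_{S_\ell}+1=L_{S_g}$. Applying Theorem~\ref{thm:lewis_weights} to the matrix of weighted rows $\mu_{ij}\Psi_{S_g}(z_{ij})^T$ contributes one more $(1\pm\delta)$ factor, exactly as in Lemma~\ref{lem:one_step}. The dimension of this matrix is at most $2W$, so the sample size is $O(W\log(2W)\delta^{-2}\log(1/\eta'))=M$.

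I expect this two-stage bilinear step, and making its exponents add up correctly, to be the main obstacle. The point to check is that each child guarantee is applied only to queries fixed independently of the other child's coreset randomness. This holds because the guarantees are uniform over all queries, so conditioning on the success events suffices.

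For the running time, each gate is evaluated on at most $q+M$ rows (leaf rows or retained rows), giving the $(q+M)|C|$ term. Each active region forms at most $(q+M)^2$ candidate rows of dimension $O(W)$. Lewis-weight sampling on these costs $\widetilde O(W(q+M)^2+W^\omega)$ per region, and there are $L$ such regions.
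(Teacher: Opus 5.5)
Your proposal is correct and follows essentially the same route as the paper's proof. It uses the same bottom-up invariant with separate $(1-\delta)^{\ell(S)}$ and $(1+\delta)^{\ell(S)}$ factors. It substitutes the two child reductions into the bilinear form in two stages and then adds one Lewis-sparsification factor. Same-scope linear maps are handled via transposed queries, a union bound covers the $L$ sparsification steps, and the running-time accounting per product region is the same.
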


\begin{lemma}[Coreset Propagation Invariant]
\label{lem:pc_invariant}
For a leaf or active product region $S$ for which $\mathcal C_S$ is constructed, let $\ell(S)$ be the number of active product regions in the subtree rooted at $S$. Conditioned on $\textsc{Sparsify}$ succeeding at all corresponding product-region sparsification steps, it holds simultaneously for all $a \in \mathbb{R}^{d_S}$ that:
\begin{equation*}
    (1 - \delta)^{\ell(S)} \sum_{x_S \in \Omega_S} |\langle a, \Phi_S(x_S) \rangle| \le \sum_{(z,\mu) \in \mathcal{C}_S} \mu |\langle a, \Phi_S(z) \rangle| \le (1 + \delta)^{\ell(S)} \sum_{x_S \in \Omega_S} |\langle a, \Phi_S(x_S) \rangle|
\end{equation*}
\end{lemma}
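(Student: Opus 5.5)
We argue by structural induction over the common v-tree, processing regions bottom-up exactly as the construction does.

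\emph{Base case.} For a leaf region $S=\{i\}$ we have $\ell(S)=0$. The coreset $\mathcal C_{\{i\}}$ is the exact enumeration of $\Omega_i$ with unit weights, so both inequalities hold with equality.

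\emph{Sum gates.} Sum-gate layers do not change $\ell(S)$ and require no separate induction step. By the identity $\langle a, L_S\Phi^{\mathrm{old}}_S(x_S)\rangle=\langle L_S^Ta,\Phi^{\mathrm{old}}_S(x_S)\rangle$, any guarantee that holds for all queries on $\Phi^{\mathrm{old}}_S$ transfers verbatim to $\Phi^{\mathrm{new}}_S$. We apply this to the query $L_S^Ta$. The same remark covers regions between active product regions, where the features are obtained purely by linear maps, as well as the final same-scope linear map applied after sparsification.

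\emph{Inductive step at an active product region $S_g=S_h\sqcup S_\ell$.} Fix a query $a$ on the product-stage features $\Psi_{S_g}$; it suffices to prove the claim for $\Psi_{S_g}$ and then push through the linear map as above. By decomposability each coordinate of $\Psi_{S_g}(x_h,x_\ell)$ is a product $R_{g_1}(x_h)R_{g_2}(x_\ell)$ of one coordinate of $\Phi_{S_h}$ and one of $\Phi_{S_\ell}$. Hence there is a bilinear form $B_a$ with
\[
\langle a,\Psi_{S_g}(x_h,x_\ell)\rangle=\Phi_{S_h}(x_h)^T B_a\,\Phi_{S_\ell}(x_\ell).
\]

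We replace the two exact child domains by their coresets one at a time.

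\textbf{First replacement.} Fix $x_\ell\in\Omega_{S_\ell}$. The map $x_h\mapsto\langle a,\Psi\rangle$ is the linear query $B_a\Phi_{S_\ell}(x_\ell)$ on $\Phi_{S_h}$. The inductive hypothesis gives a $(1\pm\delta)^{\ell(S_h)}$ sandwich for every such query simultaneously. We sum over $x_\ell$; since all terms are nonnegative the sandwich is preserved, and we obtain
\[
\sum_{x_h,x_\ell}\bigl|\langle a,\Psi\rangle\bigr|\;\approx\;\sum_i\lambda_i\sum_{x_\ell}\bigl|\Phi_{S_h}(u_i)^TB_a\Phi_{S_\ell}(x_\ell)\bigr|.
\]

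\textbf{Second replacement.} For each fixed retained $u_i$, the inner sum over $x_\ell$ is the linear query $B_a^T\Phi_{S_h}(u_i)$ on $\Phi_{S_\ell}$. The hypothesis for $S_\ell$ gives a $(1\pm\delta)^{\ell(S_\ell)}$ sandwich. Summing with the weights $\lambda_i>0$ yields a $(1\pm\delta)^{\ell(S_h)+\ell(S_\ell)}$ sandwich between the exact sum and $E(\mathcal U_{S_g},a)$, where $\mathcal U_{S_g}$ has weights $\mu_{ij}=\lambda_i\rho_j$.

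\textbf{Sparsification.} On success, Theorem~\ref{thm:lewis_weights} (exactly as in Lemma~\ref{lem:one_step}) gives $E(\mathcal C_{S_g},a)\in(1\pm\delta)E(\mathcal U_{S_g},a)$ simultaneously for all $a$. Chaining the three sandwiches gives exponent $\ell(S_h)+\ell(S_\ell)+1=\ell(S_g)$, as required.

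\emph{Main obstacle.} The delicate point is the first replacement: it requires the child guarantee to hold simultaneously for all queries, because the query $B_a\Phi_{S_\ell}(x_\ell)$ ranges over exponentially many vectors. This is exactly what the uniform statement in the inductive hypothesis supplies, and it is why each child guarantee must be conditioned on success for all $a$ at once rather than for a fixed $a$.

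A second point to check is the bookkeeping when a region is active and its child regions pass through non-active (sum-only) regions. These contribute no factor, consistent with $\ell$ counting only active product regions. We also need that the lower and upper bounds multiply correctly; this holds because all the quantities involved are nonnegative.
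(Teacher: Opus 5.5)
Your proposal is correct and follows essentially the same route as the paper: induction over the common v-tree, writing the query at an active product region as a bilinear form $\Phi_{S_h}^T M_a \Phi_{S_\ell}$, replacing first the left and then the right child domain using the uniform child guarantees, paying one $(1\pm\delta)$ factor for Lewis sparsification, and transferring through the same-scope linear map via $L^T a$. Your version makes explicit a few steps the paper leaves implicit: summing the nonnegative sandwiches over $x_\ell$ and then with weights $\lambda_i$, and reducing from $\Phi_{S_g}$ to the product-stage features $\Psi_{S_g}$ before sparsifying.
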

\begin{proof}
The base case holds trivially at the leaves. Sum gates apply fixed linear maps and introduce no error. At a product region $S_g = S_h \sqcup S_\ell$, write the complete same-scope computation as $\Phi_{S_g}=L_{S_g}\Psi_{S_g}$, where $\Psi_{S_g}$ is the product-stage feature vector and $L_{S_g}$ is linear. For every query $a$, there is a matrix $M_a$ such that $\langle a,\Phi_{S_g}(x_h,x_\ell)\rangle=\Phi_{S_h}(x_h)^T M_a\Phi_{S_\ell}(x_\ell)$. Fixing $x_\ell$, this is a linear query on $\Phi_{S_h}$, so the left child invariant preserves the sum over $x_h$ up to $(1 \pm \delta)^{\ell(S_h)}$. Fixing the surviving left assignments, it is a linear query on $\Phi_{S_\ell}$, so the right child invariant preserves the sum over $x_\ell$ up to $(1 \pm \delta)^{\ell(S_\ell)}$. Lewis sparsification of the product-stage candidate set contributes one multiplicative $(1 \pm \delta)$ factor, after which applying $L_{S_g}$ to the surviving rows introduces no further error. Since $\ell(S_g) = \ell(S_h) + \ell(S_\ell) + 1$, the invariant holds.
\end{proof}

\begin{proof}[Proof of Theorem \ref{thm:pc_fpras}]
By the union bound over the $L$ product-region sparsification steps, all sparsification steps succeed with probability at least $1 - L\eta' = 1 - \eta$. At the root, the exact sum in Lemma \ref{lem:pc_invariant} equals $2d_{\mathrm{TV}}(P,Q)$, while the coreset sum equals $2\widetilde D$ by definition. Hence the lemma gives $(1-\delta)^L 2d_{\mathrm{TV}}(P,Q) \le 2\widetilde D \le (1+\delta)^L 2d_{\mathrm{TV}}(P,Q)$. Dividing by $2$ and substituting $\delta = \frac{\epsilon}{3L}$ yields the $(1 \pm \epsilon)$ relative bounds.

For complexity, let $M=O\left(\frac{W\log(2W)}{\delta^2}\log\frac{1}{\eta'}\right)=O\left(\frac{WL^2\log(2W)}{\epsilon^2}\log\frac{L}{\eta}\right)$ be the maximum size of an internal-region coreset. A child domain has at most $q$ rows when it is a leaf and at most $M$ rows otherwise, so every binary product region has $|\mathcal{U}_S|\leq(q+M)^2$ candidate rows. The product-stage feature matrix has $O(W)$ columns, and Theorem \ref{thm:lewis_weights} therefore computes its Lewis weights in $\widetilde{O}\left((q+M)^2W+W^\omega\right)$ arithmetic operations (\cite[Theorem 5.3.1]{lee2016faster}; \cite[Lemma 2.5]{jambulapati2022improved}). Across the two circuits, evaluating the exact leaf tables costs $O(q|C|)$ operations, while applying all subsequent gates and wires only to the surviving rows costs $O(M|C|)$. Adding these costs over the $L$ product regions gives $\widetilde{O}\left((q+M)|C|+L[W(q+M)^2+W^\omega]\right)$, as claimed.
\end{proof}

\section{Application to Weighted Tree Automata}
We now apply the result of Section~\ref{sec:pc} to a fixed-tree
formulation of weighted tree automata. We use the bottom-up tensor
representation \cite{FulopVogler09,RabusseauBalleCohen16}.
The tree is binary, its leaf labelings form the sample space, and
all parameters are nonnegative. We allow the state dimensions and
transition tensors to depend on the node. Each coordinate of a
bilinear transition is a weighted sum of products of child-state
coordinates. We can therefore express the two models using sum
and product gates that respect the common tree.

Let $\mathcal{T}=(V,E)$ be a rooted binary tree with root $r$ and leaf set $L$. Its leaves are identified with observed variables $X_1,\ldots,X_{|L|}$, where $X_i\in\Omega_i$ and every $\Omega_i$ is finite. For a node $u\in V$, let $S_u\subseteq[|L|]$ be the set of leaf indices in the subtree rooted at $u$, and define $\Omega_{S_u}:=\prod_{i\in S_u}\Omega_i$. If $u$ is internal, denote its left and right children by $u_{\mathrm L}$ and $u_{\mathrm R}$; then $S_u=S_{u_{\mathrm L}}\sqcup S_{u_{\mathrm R}}$. The common sample space is $\Omega:=\Omega_{S_r}=\prod_{i=1}^{|L|}\Omega_i$. Thus a sample $x\in\Omega$ is a labeling of the leaves of $\mathcal{T}$.

Fix a model $R\in\{P,Q\}$. Every node $u\in V$ carries a positive interface dimension $d_u^R$ and exactly one gate. Evaluating the gates bottom-up produces, at every node $u$, a message $h_u^R:\Omega_{S_u}\to\mathbb R_{\geq0}^{d_u^R}$, which is the vector that the subtree rooted at $u$ exposes on its output interface as a function of the leaf labels below $u$. The gates fall into three types according to the position of $u$ in $\mathcal{T}$: a leaf reads the observed symbol, an internal node combines the two messages of its children bilinearly, and the root does the same but outputs a scalar.

\paragraph{Leaf gates.}
If a leaf $\ell$ corresponds to variable $X_i$, its gate is a nonnegative lookup table $h_\ell^R:\Omega_i\to\mathbb R_{\geq0}^{d_\ell^R}$, which assigns a $d_\ell^R$-dimensional feature vector to every symbol $x_i\in\Omega_i$.

\paragraph{Internal gates.}
Every internal node $u$ with children $u_{\mathrm L},u_{\mathrm R}$ has a nonnegative bilinear gate $\mathcal B_u^R:\mathbb R^{d_{u_{\mathrm L}}^R}\times\mathbb R^{d_{u_{\mathrm R}}^R}\to\mathbb R^{d_u^R}$. Such a gate can be interpreted as a third-order tensor $T_u^R\in\mathbb R_{\geq0}^{d_u^R\times d_{u_{\mathrm L}}^R\times d_{u_{\mathrm R}}^R}$ with the $i$-th coordinate
\begin{equation*}
\bigl[\mathcal B_u^R(p,q)\bigr]_i = \sum_{j=1}^{d_{u_{\mathrm L}}^R}\sum_{k=1}^{d_{u_{\mathrm R}}^R} T_{u,i,j,k}^R\,p_jq_k .
\end{equation*}
The gate is applied to the two child messages, so that
$$
h_u^R(x_{S_u}) := \mathcal B_u^R\!\left(h_{u_{\mathrm L}}^R(x_{S_{u_{\mathrm L}}}), h_{u_{\mathrm R}}^R(x_{S_{u_{\mathrm R}}})\right),
$$
where $x_{S_u}=(x_{S_{u_{\mathrm L}}},x_{S_{u_{\mathrm R}}})$ since $S_u=S_{u_{\mathrm L}}\sqcup S_{u_{\mathrm R}}$.

\paragraph{Root output and normalization.}
Since $d_r^R=1$, the root message is scalar. For a leaf labeling $x\in\Omega$, its unnormalized weight is $f_R(x):=h_r^R(x)$. Define the normalization factor $Z_R:=\sum_{x\in\Omega}f_R(x)$ and assume $Z_R>0$; the distribution $R\in\{P,Q\}$ is then given by
$$
R(x) = \frac{f_R(x)}{Z_R}.
$$

The normalizing constants can be computed exactly by a bottom-up dynamic program, without enumerating the full sample space $\Omega$. For each model $R\in\{P,Q\}$, define $m_\ell^R:=\sum_{x_i\in\Omega_i}h_\ell^R(x_i)$ at a leaf $\ell$ corresponding to $X_i$, and $m_u^R:=\mathcal B_u^R(m_{u_{\mathrm L}}^R,m_{u_{\mathrm R}}^R)$ at an internal node $u$. By bilinearity, a bottom-up induction gives $m_u^R=\sum_{x_{S_u}\in\Omega_{S_u}}h_u^R(x_{S_u})$ for every node $u$; at the root this reads $m_r^R=\sum_{x\in\Omega}f_R(x)=Z_R$.

We now run the two models in parallel. For every node $u$, define the joint interface dimension $d_u:=d_u^P+d_u^Q$ and the joint feature
\begin{equation*}
\Phi_u(x_{S_u}) := \begin{bmatrix} h_u^P(x_{S_u})\\ h_u^Q(x_{S_u}) \end{bmatrix} \in\mathbb R_{\geq0}^{d_u}.
\end{equation*}
At every internal node $u$, define the joint gate $\mathcal B_u:\mathbb R^{d_{u_{\mathrm L}}}\times\mathbb R^{d_{u_{\mathrm R}}}\to\mathbb R^{d_u}$ by
\begin{equation*}
\mathcal B_u\left(\begin{bmatrix}p^P\\p^Q\end{bmatrix},\begin{bmatrix}q^P\\q^Q\end{bmatrix}\right) := \begin{bmatrix} \mathcal B_u^P(p^P,q^P)\\ \mathcal B_u^Q(p^Q,q^Q) \end{bmatrix}.
\end{equation*}
This joint gate is bilinear because it is the direct sum of the two bilinear gates $\mathcal B_u^P$ and $\mathcal B_u^Q$.

By construction, the joint features satisfy
\begin{equation*}
\Phi_u(x_{S_u}) = \mathcal B_u\left(\Phi_{u_{\mathrm L}}(x_{S_{u_{\mathrm L}}}), \Phi_{u_{\mathrm R}}(x_{S_{u_{\mathrm R}}})\right).
\end{equation*}

Since $d_r^P=d_r^Q=1$, the joint root feature is two-dimensional, $\Phi_r(x)=(f_P(x),f_Q(x))^{\mathsf T}$. Define the fixed root query $a_{\mathrm{TV}}:=(Z_P^{-1},-Z_Q^{-1})^{\mathsf T}$. For every leaf labeling $x\in\Omega$, we then have
\begin{equation}
\label{eq:wta-pointwise-difference}
\left\langle a_{\mathrm{TV}},\Phi_r(x)\right\rangle = \frac{f_P(x)}{Z_P}-\frac{f_Q(x)}{Z_Q} = P(x)-Q(x).
\end{equation}
Let $\mathcal D_r:=\{(x,1,\Phi_r(x)):x\in\Omega\}$ be the exact weighted feature multiset at the root. By \eqref{eq:evaluation} and \eqref{eq:wta-pointwise-difference}, $E(\mathcal D_r,a_{\mathrm{TV}})=\sum_{x\in\Omega}|P(x)-Q(x)|=2d_{\mathrm{TV}}(P,Q)$, that is,
\begin{equation*}
d_{\mathrm{TV}}(P,Q) = \frac12 E(\mathcal D_r,a_{\mathrm{TV}}).
\end{equation*}

To apply Section~\ref{sec:pc}, we expand each bilinear gate into scalar product gates followed by scalar sum gates.

\begin{theorem}[FPRAS for TV-distance between weighted tree automata]\label{thm:wta_fpras}
Given $\varepsilon,\eta\in(0,1)$, two distributions $P$ and $Q$ induced by two nonnegative weighted tree automata on the same tree $\mathcal{T}$, there exists an FPRAS that outputs a real number $\widehat d$ such that
$(1-\varepsilon)d_{\mathrm{TV}}(P,Q)\leq\widehat d\leq(1+\varepsilon)d_{\mathrm{TV}}(P,Q)$, with probability at least $1-\eta$ in time
$$
\widetilde{O}\!\left(|V|d^2\left(q+d+\frac{d^2|V|^2}{\varepsilon^2}\log\frac{|V|}{\eta}\right)^2+|V|d^{2\omega}\right),
$$
where $q:=\max_{i\in[|L|]}|\Omega_i|$ and $d:=\max_{u\in V} d_u$ are the maximum leaf-domain size and joint node-interface dimension, respectively.
\end{theorem}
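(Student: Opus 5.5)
The plan is to reduce to Theorem~\ref{thm:pc_fpras} by compiling each weighted tree automaton into a smooth, structured-decomposable probabilistic circuit over the v-tree $\mathcal T$ itself. Two points need care. First, the root outputs must be normalized, which we handle by rescaling with $Z_R$, computed exactly by the bottom-up dynamic program above. Second, the parameters $W$ and $L$ must be controlled.

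\textbf{Construction.} For each model $R$ and each leaf $\ell$ of $\mathcal T$ corresponding to $X_i$, we create $d_\ell^R$ leaf (input) gates, one for each coordinate of the lookup table $h_\ell^R$. These are nonnegative univariate functions, which the setup of Section~\ref{sec:pc} allows.

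At an internal node $u$ we do the following.
\begin{itemize}
\item For every pair $(j,k)\in[d_{u_{\mathrm L}}^R]\times[d_{u_{\mathrm R}}^R]$ we create a product gate $\pi_{u,j,k}^R$ multiplying the $j$-th gate of $u_{\mathrm L}$ with the $k$-th gate of $u_{\mathrm R}$. Its scope is $S_{u_{\mathrm L}}\sqcup S_{u_{\mathrm R}}$, so it is decomposable and follows the split of $\mathcal T$ at $u$.
\item For each $i\in[d_u^R]$ we create a sum gate with weights $T^R_{u,i,j,k}\geq 0$ over the children $\pi^R_{u,j,k}$. All children have scope $S_u$, so the gate is smooth.
\end{itemize}
By induction over $\mathcal T$, the sum gates at $u$ compute exactly the coordinates of $h_u^R$. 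At the root we rescale the single sum gate's weights by $Z_R^{-1}$, so that the output gate computes the normalized pmf $R$. This matches $a_{\mathrm{TV}}$ in \eqref{eq:wta-pointwise-difference}. In degenerate cases (e.g.\ $d=1$ chains) we keep the trivial sum gate so that the scope bookkeeping stays uniform.

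\textbf{Parameters.} Every internal node of $\mathcal T$ is an active product region, so $L\leq|V|$. At region $S_u$ each circuit has at most $d_{u_{\mathrm L}}^Rd_{u_{\mathrm R}}^R$ product gates plus $d_u^R$ sum gates, so $W=O(d^2)$. The circuit size satisfies $|C|=O(|V|d^3)$: each of the $O(d^2)$ product gates per node has 2 wires, and each of the $d$ sum gates per node has up to $d^2$ wires.

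Plugging into Theorem~\ref{thm:pc_fpras} gives
\[
M=O\!\left(\frac{d^2|V|^2\log d}{\varepsilon^2}\log\frac{|V|}{\eta}\right)
\]
(the $\log d$ is absorbed in $\widetilde O$), and the per-region sparsification cost is $W(q+M)^2+W^\omega=d^2(q+M)^2+d^{2\omega}$. Summing over $L\leq|V|$ regions produces the two main terms of the claimed bound.

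The remaining term $(q+M)|C|=(q+M)|V|d^3$ must also be absorbed. It is dominated by $|V|d^2(q+M)^2$ whenever $d\leq q+M$, and indeed $M\gtrsim d^2\geq d$. This explains the extra additive $d$ inside the square: it lets the bound be stated cleanly regardless of the size of $q$. Correctness with probability $1-\eta$ follows directly from Theorem~\ref{thm:pc_fpras}, since the compiled circuits compute $P$ and $Q$ exactly.

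\textbf{Main obstacle.} The step needing the most care is the running-time accounting. The blown-up width $W=\Theta(d^2)$ from expanding bilinear gates into product gates must be tracked carefully. A finer alternative sparsifies directly on the $d$-dimensional post-sum features, using the fact that linear tests on $\Phi_u$ pull back to bilinear forms exactly as in Lemma~\ref{lem:pc_invariant}. I would first attempt the straightforward reduction, since it already yields the stated $d^2$ and $d^{2\omega}$ factors. The normalization step is routine, since $Z_R$ is computed exactly beforehand.
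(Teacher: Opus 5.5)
Your proposal is correct and follows essentially the same route as the paper: expand each bilinear gate into decomposable product gates $\pi_{u,j,k}$ followed by smooth sum gates with weights $T^R_{u,i,j,k}$, normalize using the exactly computed $Z_R$, and substitute $W=O(d^2)$, at most $|V|$ active product regions, and $|C|=O(|V|d^3)$ into Theorem~\ref{thm:pc_fpras}. The differences are cosmetic: the paper zero-pads all messages to dimension $d$ and appends a unary root sum gate with weight $Z_R^{-1}$ instead of rescaling the root weights. The paper also makes explicit two small points you left implicit: the trivial single-leaf tree, where Theorem~\ref{thm:pc_fpras} does not apply, there is no root sum gate to rescale, and the TV distance is computed exactly by enumeration; and the $O(|V|(qd+d^3))$ cost of the normalization dynamic program, which the stated bound absorbs.
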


\begin{proof}
At every node $u$, we pad each model's message vector with zero coordinates to the common global dimension $d$. The corresponding leaf tables and transition tensors are padded with zeros. This padding does not change either induced distribution. For notational simplicity, we henceforth reuse $h_u^R$ and $T_u^R$ to denote the corresponding zero-padded messages and tensors. Since $Z_R$ can be computed exactly by the preceding bottom-up dynamic program, we append a unary root sum gate with weight $Z_R^{-1}$, whose output is $R(x)=Z_R^{-1}f_R(x)$. In the expanded root feature space, the two-dimensional query $a_{\mathrm{TV}}$ defined above is extended by zero coefficients on all other root-scope gate coordinates.

We next represent every bilinear gate $\mathcal B_u^R$ exactly as a layer of scalar product gates followed by a layer of scalar sum gates. For every node $u\in V$ and every $i\in[d]$, let $g_{u,i}$ denote the gate corresponding to the $i$-th coordinate of the padded message at node $u$. At a leaf $\ell$, $g_{\ell,i}$ is a leaf gate satisfying
$$
R_{g_{\ell,i}}(x_{S_\ell}) = \bigl[h_\ell^R(x_{S_\ell})\bigr]_i
$$
under each parameterization $R\in\{P,Q\}$.

Now let $u$ be an internal node. Recall that its bilinear gate is given coordinatewise by
$$
\bigl[\mathcal B_u^R(p,q)\bigr]_i = \sum_{j=1}^{d}\sum_{k=1}^{d} T_{u,i,j,k}^R\,p_jq_k, \qquad i\in[d].
$$
For every $j\in[d]$ and $k\in[d]$, introduce a product gate $p_{u,j,k}$ whose children are $g_{u_{\mathrm L},j}$ and $g_{u_{\mathrm R},k}$. Under parameterization $R$, it computes
$$
R_{p_{u,j,k}}(x_{S_u}) = R_{g_{u_{\mathrm L},j}}(x_{S_{u_{\mathrm L}}}) R_{g_{u_{\mathrm R},k}}(x_{S_{u_{\mathrm R}}}).
$$

For every $i\in[d]$, define $g_{u,i}$ to be a sum gate whose children are all product gates $p_{u,j,k}$, with edge weight $T_{u,i,j,k}^R$ under parameterization $R$. Thus,
$$
R_{g_{u,i}}(x_{S_u}) = \sum_{j=1}^{d}\sum_{k=1}^{d} T_{u,i,j,k}^R\, R_{p_{u,j,k}}(x_{S_u}).
$$
Hence this two-layer subcircuit computes the bilinear gate:
$$
\begin{bmatrix} R_{g_{u,1}}(x_{S_u})\\ \vdots\\ R_{g_{u,d}}(x_{S_u}) \end{bmatrix} = \mathcal B_u^R\!\left(h_{u_{\mathrm L}}^R(x_{S_{u_{\mathrm L}}}), h_{u_{\mathrm R}}^R(x_{S_{u_{\mathrm R}}})\right) = h_u^R(x_{S_u}).
$$

Each product gate $p_{u,j,k}$ is decomposable because its two children have the disjoint scopes $S_{u_{\mathrm L}}$ and $S_{u_{\mathrm R}}$. Each sum gate $g_{u,i}$ is smooth because all of its children have scope $S_u=S_{u_{\mathrm L}}\sqcup S_{u_{\mathrm R}}$. All product gates respect $\mathcal T$, so the two expanded circuits are structured with respect to the same v-tree. Therefore, Theorem~\ref{thm:pc_fpras} gives the stated approximation guarantee.

For the running time, let $I:=|V\setminus L|$ be the number of internal nodes. If $I=0$, the TV-distance can be computed exactly by enumerating the single leaf domain, so assume $I\geq1$. Each expanded circuit has at most $d^2$ product gates and $d$ sum gates per internal node; hence $W\leq d^2+d+1=O(d^2)$, the theorem's product-region parameter is $I\leq|V|$, and, counting gates and wires, $|C|=|C_P|+|C_Q|=O(|L|d+Id^3)$. The preliminary normalization dynamic program costs $O(|L|qd+Id^3)$ and is subsumed by the claimed bound. Substituting these quantities into Theorem~\ref{thm:pc_fpras} gives the stated running time directly.
\end{proof}

\section*{Acknowledgements}
The authors thank Barath Ashok for his involvement in the early stages of this work. They also thank Weiming Feng for putting them in touch and helping bring about this collaboration. 

The main result of this paper was obtained independently, around the same time, by Yucheng Fu and by the other authors. 
\ignore{After completing his work on zonotope compression~\cite{Fu26}, Fu obtained the result following a suggestion from GPT-5.6 Pro. When Arnab Bhattacharyya informed Weiming Feng of the other authors' result, Feng pointed out that Fu had also obtained it and introduced the authors by email. The authors then agreed to present their results jointly in this paper.}
The authors used Gemini in the early stages of this work, primarily to check arguments they had proposed. In the later stages, they used ChatGPT (GPT-5.6 Sol and GPT-6 Astra) to draft portions of the manuscript and to assist with writing. The Lean4 formalization was accompanied with the aid of Tex2Lean~\footnote{The tool is available at \url{https://marketplace.visualstudio.com/items?itemName=kuldeepmeel.tex2lean4}}, which in turn uses Claude and Codex.

\bibliography{refs}
\bibliographystyle{alpha}

\end{document}